\def\BibTeX{{\rm B\kern-.05em{\sc i\kern-.025em b}\kern-.08em
    T\kern-.1667em\lower.7ex\hbox{E}\kern-.125emX}}
\theoremstyle{definition}
\newtheorem{definition}{Definition}[section]
\newtheorem{lemma}{Lemma}[section]
\newcommand{\bh}{\beta}
\title{Enabling Bitcoin Smart Contracts\\on the Internet Computer}
\author{\IEEEauthorblockN{Ryan Croote}
\IEEEauthorblockA{\textit{DFINITY} \\
ryan.croote@dfinity.org}
\and
\IEEEauthorblockN{Islam El-Ashi}
\IEEEauthorblockA{\textit{DFINITY} \\
islam.elashi@dfinity.org}
\and
\IEEEauthorblockN{Thomas Locher}
\IEEEauthorblockA{\textit{DFINITY} \\
thomas.locher@dfinity.org}
\and
\IEEEauthorblockN{Yvonne-Anne Pignolet}
\IEEEauthorblockA{\textit{DFINITY} \\
yvonneanne@dfinity.org}
}
\begin{document}

\sloppy

\maketitle

\begin{abstract}
There is growing interest in providing programmatic access to the value locked in Bitcoin, which famously offers limited programmability itself. Various approaches have been put forth in recent years, with the vast majority of proposed mechanisms either building new functionality on top of Bitcoin or leveraging a bridging mechanism to enable  smart contracts that make use of ``wrapped'' bitcoins on entirely different platforms.

In this work, an architecture is presented that follows a different approach. The architecture enables the execution of Turing-complete Bitcoin smart contracts on the Internet Computer (IC), a blockchain platform for hosting and executing decentralized applications. Instead of using a bridge, IC and Bitcoin nodes interact directly, eliminating potential security risks that the use of a bridge entails. This integration requires novel concepts, in particular to reconcile the probabilistic nature of Bitcoin with the irreversibility of finalized state changes on the IC, which may be of independent interest.

In addition to the presentation of the architecture, we provide evaluation results based on measurements of the Bitcoin integration running on mainnet.
The evaluation results demonstrate that, with finalization in a few seconds and low execution costs, this integration enables complex Bitcoin-based decentralized applications that were not practically feasible or economically viable before.
\end{abstract}

%\begin{IEEEkeywords}
%bitcoin, smart contract, blockchain
%\end{IEEEkeywords}
% !TEX root = ./bitcoin_integration.tex
\section{Introduction}
\label{sec:introduction}

Despite significant advances in blockchain technology over the past decade, 
%which have introduced a vast range of powerful capabilities, 
Bitcoin is still the most well known decentralized platform, commanding a large share of the market across all projects in the blockchain space.
However, in stark contrast to other cryptocurrencies,
Bitcoin funds hardly appear in \emph{smart contracts}~\cite{szabo1996}, decentralized computer programs that automatically and deterministically execute their logic based on predefined conditions, due to the restricted expressiveness of Bitcoin's scripting language.
While the idea and possibility to create (simple) Bitcoin smart contracts has been around at least since 2011, this topic has garnered considerable attention in recent years and much effort has been expended with the goal to deliver powerful, general-purpose smart contracts for Bitcoin. Examples for such contracts include decentralized payroll or escrow systems, wallets, and applications for yield farming, lending and borrowing, as well as liquidity provision.
%payroll or escrow systems, wallets, and protocols for yield farming, lending and borrowing, and liquidity provision.

\begin{figure}[t]
\center
  \includegraphics[width=0.9\columnwidth]{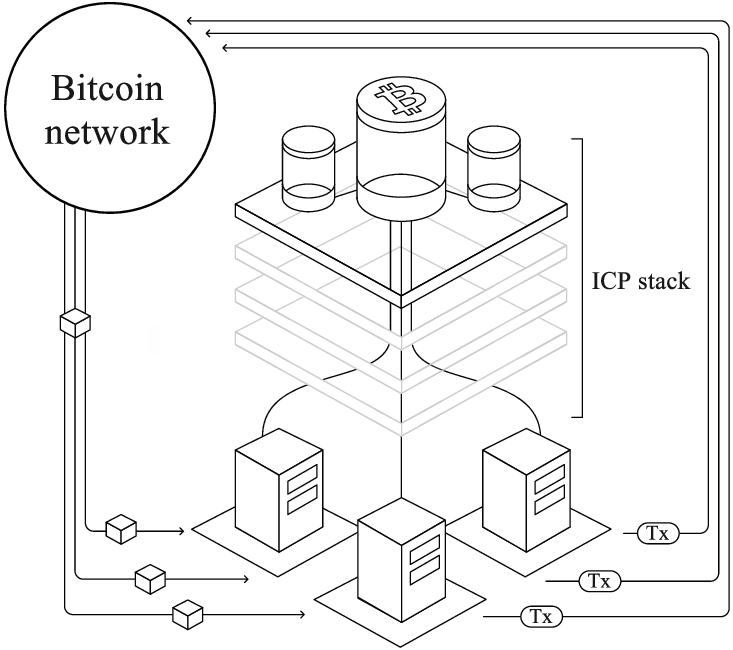}
    \caption{Architecture overview: IC node machines obtain blocks from the Bitcoin network and pass them through the ICP stack to the Bitcoin canister, which makes the Bitcoin blockchain state available to other canisters. Canisters can hold bitcoins natively and let node machines sign Bitcoin transactions on their behalf and forward them to the Bitcoin network.}
    \label{fig:intro}
   \vspace{-.4cm}
\end{figure}

Ideally, a smart contract platform for Bitcoin provides the following properties.
\begin{itemize}
\item \emph{Read and write}: Smart contract decisions can depend on the Bitcoin blockchain's state and write to it.
\item \emph{Security and trust}: Correct behavior as long as a supermajority of the entities involved adheres to the protocol.
\item \emph{Efficiency}: Typical user requests can be processed quickly (in the order of seconds, not minutes) at reasonable costs (fractions of cents, not several U.S. dollars).
\end{itemize}
Existing approaches to enable Bitcoin smart contracts, other than the presented architecture, fail to achieve all three of them.
Solutions extending Bitcoin directly are inherently slow and expensive.
Integrating Bitcoin functionality in another platform naturally requires reliable and secure access to the Bitcoin blockchain state. A customary approach to interconnect distinct blockchains is the use of so-called \emph{bridges}, which are (centralized or decentralized) third-party platforms that act as intermediaries. A typical use case of a bridge is to make an asset, such as a cryptocurrency, of one blockchain accessible on another. However, adding a bridge introduces a dependency, which needs to be trusted and increases the attack surface.
Since bridges have been marred by hacks that led to losses in the order of hundreds of millions~\cite{lee2023}, an integration without any reliance on bridges is preferable from a security point of view.
Proposed solutions without bridges either fail to provide a mechanism to read and write Bitcoin state directly, introduce security vulnerabilities, or are expensive and slow; see~\S\ref{sec:related_work} for a more in-depth discussion.

The architecture introduced in this paper provides support for Bitcoin smart contracts on the \emph{Internet Computer (IC)}\footnote{See \url{https://internetcomputer.org/}.}.
The motivation for this choice is manifold: First, the IC protocol suite (ICP stack) provides strong security guarantees as it can sustain a theoretically optimal upper bound on the number of malicious nodes and it maintains state integrity even under asynchrony (whereas \emph{liveness} requires partial synchrony). Moreover, it provides a smart contract execution environment that is efficient both in terms of speed and cost, and it allows smart contracts, called \emph{canisters} on the IC, to store large amounts of data. As we will see, our architecture leverages the latter capability to hold the sizable Bitcoin blockchain state.
Additionally, it offers a high degree of scalability, which is needed to support a large number of Bitcoin smart contracts.
Lastly, smart contracts interacting with the Bitcoin network must be able to create ECDSA or Schnorr signatures, which underpin the security of Bitcoin. The IC implements both threshold ECDSA~\cite{groth2022} and threshold Schnorr protocols that are secure under asynchrony, providing canisters with public keys for both schemes and the ability to sign arbitrary data under those keys.

In our architecture, IC nodes and Bitcoin nodes communicate directly, obviating the use of any additional infrastructure.
The \emph{Bitcoin adapter}, introduced in \S\ref{sec:bitcoin_adapter}, is the novel component that 
enables the direct interaction between the two networks.
This integration at the node level is used to send transactions---making use of the threshold-signing functionality exposed to smart contracts on the IC---to the Bitcoin network and to ingest Bitcoin blocks as shown in Figure~\ref{fig:intro}.
The Bitcoin blockchain state is maintained in a specific smart contract, called the \emph{Bitcoin canister}, presented in \S\ref{sec:bitcoin_canister}, which is an unprecedented approach given its sheer size.
Other canisters can learn about the current state of the Bitcoin blockchain and send transactions by interacting with the Bitcoin canister through a simple interface.

Although the IC has various features that facilitate cross-chain integrations, many challenges had to be tackled. In particular, Bitcoin and the IC exhibit many fundamental differences: Bitcoin only has probabilistic guarantees about the blockchain state and requires strong synchronicity assumptions regarding message transmissions. By contrast, the IC ensures that state change operations are never rolled back once \emph{finalized}. Moreover, the IC only requires sporadic periods of synchronicity for liveness as mentioned above. In short, the architecture must ensure that there is always deterministic agreement on the probabilistic state of the Bitcoin blockchain.
To this end, new notions of \emph{stability} of the Bitcoin blockchain state are introduced, which are used extensively in the architecture and are essential for its security.
%instrumental for the security of the presented architecture.

%Naturally, there is also a slew of technical details that are needed to make the integration work, including the chunking of Bitcoin blocks because they may be larger than the maximum IC block size of 2 MiB, the processing of Bitcoin blocks across multiple execution rounds because the required computation exceeds the single-round upper bound, the restriction of the response size of requests using pagination, among other. Due to space constraints, we restrict our description to the most pertinent technical aspects of the architecture.

The paper is structured as follows. Background information about both blockchains and definitions are provided in \S\ref{sec:definitions}. The main contribution of this work is the architecture itself, which is presented in detail in \S\ref{sec:architecture}.
%Its main components are the aforementioned Bitcoin adapter
%, providing connectivity at the network layer,
%and Bitcoin canister. %, introduced in \S\ref{sec:bitcoin_adapter} and \S\ref{sec:bitcoin_canister}, respectively.
%hosting and providing access to the Bitcoin blockchain state.
The evaluation in \S\ref{sec:evaluation} consists of a discussion of security considerations (in \S\ref{sec:security}) to elucidate crucial aspects of the architecture design and a presentation of measurement results of the live system running on IC mainnet (in \S\ref{sec:measurements}).
Related work is summarized in \S\ref{sec:related_work} and \S\ref{sec:conclusion} concludes the paper.

% !TEX root = bitcoin_integration.tex

\section{Background and Definitions}
\label{sec:definitions}

%This section provides background information on both the Internet Computer and Bitcoin at a high level of abstraction but with sufficient detail so that a reader unfamiliar with these platforms can understand the architecture introduced in \S\ref{sec:architecture}.
%Novel concepts are also introduced and defined formally.

\subsection{Background on the Internet Computer (IC)}
\label{sec:background_ic}

We briefly present the IC Protocol (ICP) stack. The IC whitepaper~\cite{dfinity2022internet} provides a more comprehensive introduction.

\noindent\textbf{Objective.}
The IC aims to provide efficient multi-tenant, general-purpose, and secure computation in a decentralized and geo-replicated manner. % tolerating Byzantine faults.
In short, it is a blockchain-based platform for the execution of smart contracts called \emph{canisters}. Canisters are the smallest units bundling logic and state, allowing for parallel execution. In response to a message that a canister receives from a user or another canister, a canister executes its logic, which may trigger the transmission of messages, modification of its internal state, or the creation of other canisters. Moreover, canisters can schedule the execution of (parts of) their own code using timers, in contrast to most other smart contract platforms where execution can only be triggered by users.
All canisters are hosted on dedicated individually untrusted nodes running the ICP stack. % residing in individually untrusted, independent datacenters.%\footnote{See \url{https://internetcomputer.org/docs/current/references/ic-interface-spec}.}
%The IC strives to realize the ``world computer'' vision, which extends the internet with computing capabilities, providing tamper-proof execution with minimal trust assumptions.
%The IC's distributed nature and its replication are mostly abstracted away from the canister developers and users. 
%Users submit their requests as messages to the canisters and the canisters process them, 
%possibly updating their state and communicating with other canisters, and reply back to the users.
%The IC strives to provide latency and throughput similar to traditional web application, to the extent possible for a globally distributed platform with strong security guarantees.

\noindent\textbf{Subnets.} 
%State machine replication achieves the same output state for a service replicated on multiple machines.
%Each state machine generates the next state by applying \emph{deterministic} state transformations---based on the deterministic execution of the canisters' code---to the previous state by processing ordered input messages from users and other canisters. The result is a new state plus output messages to other canisters and responses to users.
The nodes are partitioned into \emph{subnets}, each subnet providing blockchain-based state machine replication~\cite{schneider1990} for the set of canisters deployed on it and connecting to nodes in other subnets to enable communication between canisters hosted on different subnets.
Each node in a subnet runs all the canisters deployed on that subnet. There are subnets with 100,000+ canisters and subnets with just a handful of canisters. Subnets consist of 13-40 nodes spread across the Americas, Europe, and Asia.
Node providers and the location of their nodes are public, thus a high degree of fault tolerance can be achieved with relatively low replication factors.
%\footnote{In comparison to the tens of thousands of nodes estimated to participate in the Bitcoin network (see, e.g., \url{https://bitnodes.io/}), a replication factor of 13-40 may seem low. However, the Bitcoin source code is managed by a small number of developers and the top-two mining pools together provide more than half of the entire hash power (see, e.g., \url{https://hashrateindex.com/hashrate/pools}), which severely limits the effective decentralization of Bitcoin.}
%For applications in need of a higher degree of decentralization, there are subnets spanning up to 40 nodes.

\noindent\textbf{Adversarial model and fault tolerance.} 
%As mentioned before, the nodes running the IC protocol are located in geographically distributed, independently operated data centers, thereby bolstering security and decentralization.
The IC is designed for Byzantine fault-tolerance, i.e., faulty nodes may deviate in arbitrary ways from the IC protocol, 
accounting for bugs, outages, as well as outright malicious behavior by colluding nodes.
In any given subnet with $n = 3f+1$ nodes, at most $f$ nodes may be faulty. This is the highest number that
can be tolerated without additional assumptions on failures and message delivery~\cite{schneider1990,fischer1985}. 
The IC consensus protocol~\cite{camenisch2022} guarantees safety under asynchronous network conditions.

%Layers
\noindent\textbf{Layers.} 
The ICP stack consists of four layers. 
The \emph{networking layer} disseminates protocol and user-generated messages. 
Message validation and ordering is handled by the \emph{consensus layer}.
The \emph{message routing layer} ensures that messages end up at the right canister and are scheduled deterministically.
The \emph{execution layer} triggers the deterministic execution of the canisters deployed on the IC and persists the state changes.

%The evolution of the protocol, subnet creation, onboarding of additional nodes as well as the assignment of nodes to subnets is managed by a set of canisters responsible for governance and protocol maintenance. These canisters let ICP token holders lock (stake) tokens and vote on such decisions with weights representing their stake in the protocol. Protocol upgrades and subnet changes are then executed automatically, without any manual intervention.

\begin{figure}[t]
\center
  \includegraphics[width=\columnwidth]{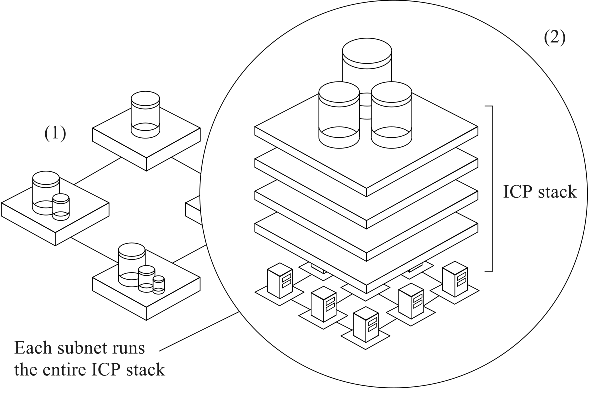}
    \caption{(1) The Internet Computer Protocol (ICP) partitions its nodes into mutually disjoint subnets. (2) The nodes of each subnet run a stack of protocols for state machine replication to execute canister smart contracts.
    }
    \label{fig:ic}
\end{figure}

\noindent\textbf{Deterministic finalization.} 
Once the IC consensus protocol~\cite{camenisch2022} reaches agreement on the next block (containing messages from users and canisters on other subnets) to be added to the subnet's blockchain, the block is considered \emph{finalized}. Since this decision is irreversible, the block content is guaranteed to be processed by the message routing layer and forks are impossible. Thus, finalization is \emph{deterministic}. 

Figure~\ref{fig:ic} depicts the structure of the Internet Computer, comprising multiple subnets with each subnet consisting of a replicated state machine run on multiple nodes.

\subsection{Background on Bitcoin}
\label{sec:background_bitcoin}

Bitcoin is a cryptocurrency operated by a decentralized peer-to-peer (P2P) network. Bitcoin transactions are used to transfer bitcoins from at least one party to one or more parties.
Each transaction produces a set of \emph{outputs}, which consist of a \emph{value}, i.e., the amount of bitcoin, and a \emph{locking script} defining the conditions to spend this output. The \emph{inputs} are simply previous outputs that are spent entirely in this transaction.%\footnote{A notable exception are \emph{coinbase transactions} that create new bitcoins, i.e., these transactions have no inputs.}

Since all inputs are fully consumed in a transaction, transferring the bitcoins to the outputs (minus a transaction fee), all spendable bitcoins are held in the \emph{unspent transaction output (UTXO) set}.
Thus, knowledge of the UTXO set suffices to determine the balance of any Bitcoin address.%\footnote{Bitcoin addresses are effectively encodings of locking scripts.}

Bitcoin transactions are processed in batches called \emph{blocks}, each block referencing a specific predecessor block. 
A block is only valid if its hash, interpreted as a large number, is at most a certain target value.
%Let $w(b)$ denote the \emph{hash work} expended to create block $b$. 
%Technically, $w(b) > w(b')$ if $\mathcal{H}(b) < \mathcal{H}(b')$, where $\mathcal{H}$ denotes the hash function used in Bitcoin, which is the SHA-256 algorithm applied to the block and then again to the resulting hash. A large $w(b)$, i.e., 
Let $\mathcal{H}$ denote the hash function used in Bitcoin, which is the SHA-256 algorithm applied to the block and then again to the resulting hash.
A numerically small $\mathcal{H}(b)$ implies a high level of \emph{difficulty} to find such a block by varying block metadata and shuffling transactions.\footnote{See \url{https://en.bitcoin.it/wiki/Difficulty}.}
The \emph{difficulty target} to create a block is adaptive and ensures that a large computational effort is required to create a block with 10 minutes between block creation on average, giving the network sufficient time to disseminate newly found blocks. The aforementioned transaction fee of every transaction in a block goes to the party that computed the block.

Multiple blocks may reference the same predecessor block, inducing a directed tree of blocks with the root being the \emph{genesis block}. Let $B$ denote the tree of all blocks available at some peer. Peers may have slightly different views and the following definitions always refer to a local view.
Given the set $B$ of blocks, the \emph{height} $h(b)$ of block $b \in B$ is the number of predecessor blocks, terminating at the genesis block $b_g$ with height $h(b_g) = 0$.
While there is only one path from any block to the genesis block, a block may have multiple successors.
Let $succ(b)$ denote the set of successors of block $b$. A depth function $d$ measures the maximum cumulative cost from a given block $b$ to any of the tips (i.e., leaves) that are connected to $b$ using a cost function $c: B \rightarrow \mathbb{R}$:\footnote{In graph theory, the definitions of \emph{height} and \emph{depth} are reversed. We use the terminology commonly found in Bitcoin literature.}

$$
d(b) \coloneqq \begin{cases}
  c(b) & succ(b) = \{\} \\
  c(b) + \max_{b' \in succ(b)} d(b') & \text{ otherwise}
\end{cases}
$$

Since a block $b$ is always associated with a block header $\beta$, the height and depth functions can be applied to block headers as well, i.e., $h(b) = h(\beta)$ and $d(b) = d(\beta)$ for a block $b$ and the associated block header $\beta$. For a given tree $B$, we further define that $d(B) \coloneqq d(b_g)$, which states that the depth of the blockchain corresponds to the depth of its genesis block.

We introduce two specific depth functions: The first function $d_c$ measures the maximum ``distance'' of a block $b$ to the tips, i.e., $c(b) \coloneqq 1$ for all $b \in B$.
The function $d_c$ is related to the concept of \emph{confirmations} in Bitcoin. Once a transaction appears in a block, the transaction is said to have one confirmation, and the confirmation count increases with each appended block. If a transaction in a block $b$ has $c$ confirmations, then $d_c(b) = c$. 
Let $w(b)$ denote the \emph{hash work} expended to create block $b$.
Technically, $w(b) > w(b')$ if $\mathcal{H}(b) < \mathcal{H}(b')$. The second function $d_w$ determines the maximum amount of hash work over all paths between $b$ and connected tips, i.e., $c(b) \coloneqq w(b)$ for all $b \in B$.
The function $d_w(b)$ determines the current blockchain, which is the chain of blocks from the genesis block to a tip that maximizes the sum of expended hash work. Formally, the \emph{current blockchain} is the chain of blocks $[b_0 = b_g, b_1, \ldots, b_k]$, $b_i \in succ(b_{i-1})$ for all $i \in \{1, \ldots, k\}$, such that $\sum_{i=0}^k d_w(b_i) = d_w(B)$.

%The interested reader can find more extensive documentation of the Bitcoin protocol online.\footnote{See \url{https://en.bitcoin.it/wiki/Protocol_documentation}.} 
More details on the Bitcoin protocol can be found online.\footnote{See \url{https://en.bitcoin.it/wiki/Protocol_documentation}.} 

\subsection{Novel Concepts}
\label{sec:concepts}

In addition to the definitions in \S\ref{sec:background_bitcoin}, we introduce our own concepts that address the lack of deterministic finality in Bitcoin. As mentioned before, the knowledge of the UTXO set suffices to determine the balance of any Bitcoin address; however, it is possible that a different chain (i.e., a \emph{fork}) becomes the current blockchain, invalidating all blocks on the former blockchain between its tip and the block where the two chains diverged. While such \emph{reorganizations} have become increasingly rare, it is nevertheless a risk, in particular for a smart contract that cannot be updated easily.

Since there is no deterministic finality, a weaker notion of ``stability'' is introduced, which can be motivated as follows. It is evident that a higher block depth (for either depth function) implies a higher chance that a block will persist. This is a necessary but not sufficient condition when considering that there can be competing forks growing at a similar rate: even if a block has a high block depth, another block might exist at the same height with a higher block depth. Thus, a block must have a sufficiently higher block depth than any other block at the same block height. Both conditions are captured in the following definition.

\begin{definition}[$\delta$-Stability] Given a depth function $d: B \rightarrow \mathbb{N}_0$, a block $b \in B$ is \emph{$\delta$-stable} if
\begin{enumerate}
 \item $d(b) \ge \delta$ and
 \item $\forall b' \in B \setminus \{b\}, h(b') = h(b): d(b) - d(b') \ge \delta$.
\end{enumerate}
\label{def:stability}
\end{definition}

For any $\delta>0$, the definition implies that there can be only one $\delta$-stable block at any height $h$, making it possible to extend the definition to block heights: height $h$ is $\delta$-stable if there is a block $b$ such that $h(b) = h$ and $b$ is $\delta$-stable.
It also follows that a $\delta$-stable block is $\delta'$-stable for any $\delta' \le \delta$. Given a certain depth function, the \emph{stability} of a block $b$ is the largest $\delta$ for which it is $\delta$-stable.

\begin{figure}[t]
\center
  \includegraphics[width=0.8\columnwidth]{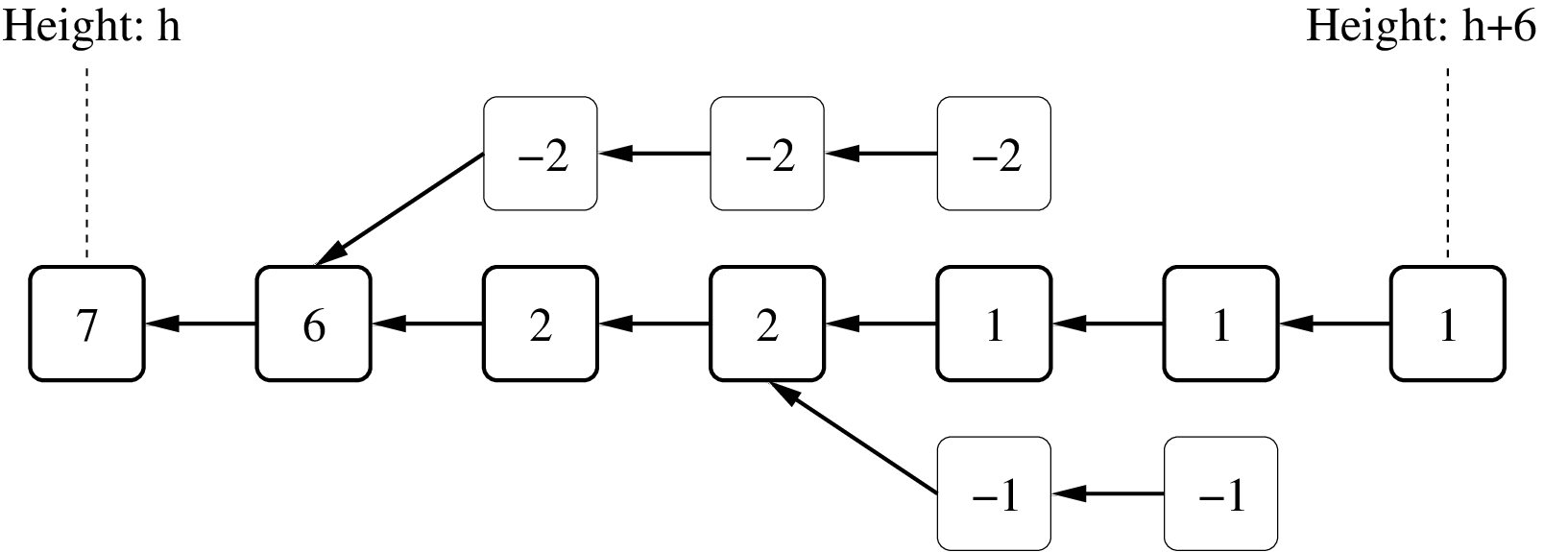}
    \caption{A blockchain with two forks is shown with the confirmation-based stability indicated inside each block.}
    \label{fig:stability}
    %\vspace{-.4cm}
\end{figure}

Definition~\ref{def:stability} can be instantiated with either depth function introduced in \S\ref{sec:background_bitcoin}, serving different purposes. When using the depth function $d_c$ to count confirmations, we call it \emph{confirmation-based ($\delta$-)stability}. More precisely, a transaction in a block $b$ is considered to have $c$ confirmations if the confirmation-based stability of $b$ is $c$.
By contrast, \emph{difficulty-based ($\delta$-)stability}, which applies the depth function $d_w$, is used to determine if a block is ``stable enough'' in the sense that it will persist with high probability. More precisely, the difficulty-based stability $d_w(b)$ is expressed relative to the difficulty of a particular block $b^*$, i.e., $d_w(b) / w(b^*)$, to be able to specify a threshold $\delta$ that is independent of current difficulties. We say that block $b$ is \emph{difficulty-based $\delta$-stable with respect to $b^*$ if $d_w(b) / w(b^*) \ge \delta$}.
Details about the usage of these concepts are presented in \S\ref{sec:bitcoin_canister}.

An example block tree is depicted in Figure~\ref{fig:stability} with blocks at the same height in the same horizontal position (and block heights increasing from left to right). The number in each block indicates its confirmation-based stability. The figure shows show that the stability of a block may stagnate even when its block depth increases. Another observation is that the stability of a block is negative when it is on a fork that is shorter than the longest chain.
While these concepts do not exist in Bitcoin, it is worth noting that they merely introduce conservative rules to deal with forks. In the absence of forks, the rules are equivalent to the standard definitions in the sense that confirmations are counted and the current blockchain is determined in the same way as customary in Bitcoin.

% !TEX root = ./bitcoin_integration.tex
\section{Architecture}
\label{sec:architecture}

%After presenting an overview in \S\ref{sec:overview}, the two main components of the architecture, the \emph{Bitcoin adapter} and 
%the \emph{Bitcoin canister} are discussed in \S\ref{sec:bitcoin_adapter} and \S\ref{sec:bitcoin_canister}, respectively.

%After presenting an overview in \S\ref{sec:overview}, the two main components of the architecture are described in the subsequent sections: In \S\ref{sec:bitcoin_adapter}, the \emph{Bitcoin adapter} is discussed, which handles the exchange of information with the Bitcoin network, followed by a detailed description of the \emph{Bitcoin canister} in \S\ref{sec:bitcoin_canister}, which makes the Bitcoin blockchain state available and provides the Bitcoin integration interface for smart contracts.

\subsection{Overview}
\label{sec:overview}

The functionality to implement Bitcoin smart contracts on the Internet Computer hinges upon two core building blocks. The first one enables IC nodes and nodes in the Bitcoin P2P network to interact directly: the \emph{Bitcoin adapter} is introduced at the networking level, running alongside the main IC process and connecting to nodes in the Bitcoin network. The Bitcoin adapter both ingests updates about the Bitcoin blockchain state (in the form of Bitcoin blocks) and transmits state changes to the Bitcoin network (in the form of Bitcoin transactions). As mentioned in \S\ref{sec:introduction}, this mechanism fundamentally differs from the commonly employed approach to overcome the siloed nature of blockchains using bridges.

The second building block provides an interface for canisters to read from and write to the Bitcoin blockchain.
Running on top of the IC stack, the \emph{Bitcoin canister} is responsible to keep track of the Bitcoin blockchain state.
Other canisters can interact with the Bitcoin canister through its API in order to get information about the Bitcoin blockchain as well as transmit Bitcoin transactions to update the Bitcoin blockchain state.

\begin{figure}[t]
\begin{center}
  \includegraphics[width=\columnwidth]{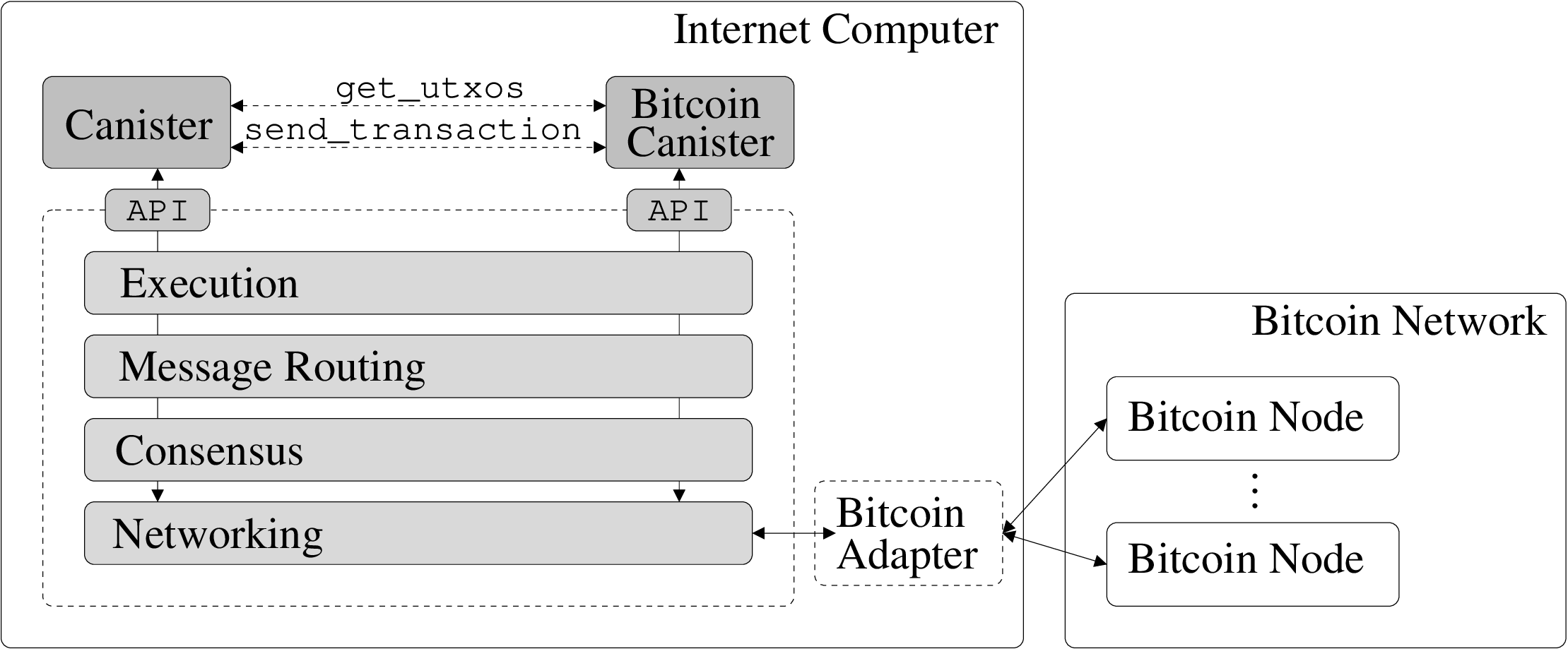}
    \caption{Overview of the Bitcoin integration on the IC.}
    \label{fig:overview}
\end{center}
\vspace{-.4cm}
\end{figure}

Figure~\ref{fig:overview} provides an overview of the architecture. The (sandboxed) Bitcoin adapter communicates with the Bitcoin network and the Bitcoin canister, interfacing at the ICP networking layer. The execution layer executes the Bitcoin canister, which processes requests from other canisters, e.g., requests to get the UTXOs of a Bitcoin address or to send a Bitcoin transaction to the Bitcoin network.

\subsection{Bitcoin Adapter}
\label{sec:bitcoin_adapter}

The objective of the Bitcoin adapter is to enable two-way communication between the Internet Computer and the Bitcoin network without any intermediaries.
The Bitcoin adapter is a sandboxed OS-level process that can be instantiated on every IC node. It communicates with the main IC process using standard inter-process communication. The source code, roughly 6000 lines of Rust code, resides in the IC repository.\footnote{See \url{https://github.com/dfinity/ic/tree/master/rs/bitcoin/adapter}.}

In some sense, the Bitcoin adapter is akin to a \emph{simplified payment verification (SPV)} Bitcoin client.\footnote{See \S8 of the Bitcoin white paper at \url{https://bitcoin.org/bitcoin.pdf}.} The main difference to a standard SPV client is that the Bitcoin adapter is geared towards providing information about the Bitcoin blockchain state to the Bitcoin canister, interacting with Bitcoin nodes using the Bitcoin P2P protocol and a bespoke protocol for the communication with the Bitcoin canister.

\noindent\textbf{Connectivity to the Bitcoin network.} 
Given a hard-coded list of DNS seed nodes---the same seed nodes as used by \texttt{bitcoind}---, the Bitcoin adapter attempts to connect to a configurable number $\ell$ ($=5$ on mainnet) of randomly chosen Bitcoin nodes using the following discovery process. At start-up, the Bitcoin adapter recursively requests IP addresses of Bitcoin nodes until the number of collected addresses reaches a certain upper threshold $t_u$. Once the threshold is reached, the Bitcoin adapter chooses Bitcoin nodes uniformly at random from the list of collected addresses and tries to establish $\ell$ connections. Whenever a connection is lost, a new random connection is established. If the number of available IP addresses drops below a lower threshold $t_l$, the Bitcoin adapter requests more addresses until its size reaches $t_u$ again.
The parameters $t_l$ and $t_u$ are set to $500$ and $2000$ for mainnet, $100$ and $1000$ for testnet, and $t_l = t_u = 1$ for regtest (for local testing). Experiments showed that these numbers (for mainnet and testnet) result in mostly disjoint sets of connected Bitcoin nodes at every Bitcoin adapter for common subnet sizes of 13 to 40 nodes.
The discovery process is skipped in regtest mode because the IP addresses are pre-configured.
If $t_u$ addresses cannot be collected, the Bitcoin adapter remains in the discovery state but provides its service to the Bitcoin canister as long as there is at least one active connection.

It is worth noting that IC nodes only have IPv6 addresses and therefore the Bitcoin adapter is restricted to interacting with Bitcoin nodes accessible over IPv6. Many Bitcoin nodes that have an IPv6 address are presumably \emph{dual-stacked}, i.e., they are themselves connected over IPv4 to other nodes.\footnote{Unfortunately, there is no reliable way to verify this assumption. It is reasonable to assume that the majority of node administrators do not configure connectivity exclusively over IPv6.}
We conjecture that the Bitcoin adapters connect to a fairly unbiased random sample of Bitcoin nodes.

%\subsubsection{Interaction with the Bitcoin Network}
\noindent\textbf{Interaction with the Bitcoin network.}
Once the Bitcoin adapter is connected to at least one Bitcoin node, it downloads the Bitcoin block headers, starting from the hard-coded genesis block header. For each obtained block header, the Bitcoin adapter verifies its validity, making sure that it is well-formed, the \texttt{hashPrevBlock} field points to a locally available block header, the \texttt{Bits} field contains the correct difficulty target, the block header hash satisfies this target, and the \texttt{Time} field contains a valid \emph{block timestamp}.\footnote{See \url{https://en.bitcoin.it/wiki/Block_timestamp}.}
Any block header violating some of these conditions is discarded.

%In order to prevent an attacker from diverting the Bitcoin adapter to an attacker-controlled chain of block headers with a lower difficulty, the Bitcoin adapter uses hard-coded checkpoints: If a block header is received for a height with a checkpoint hash, the block header is only accepted if the block header hash matches the checkpoint hash. There are 30 hard-coded checkpoints in the code, the first at height 11,000 (April 16, 2009) and the last at height 704,256 (October 9, 2021).
%Additional checkpoints are not deemed necessary because of the high target difficulty at heights above 700,000.

The Bitcoin adapter does not perform any fork resolution, i.e., it accepts and stores any valid block header, possibly multiple block headers at the same height. Adding fork resolution logic would go against the goal of keeping the Bitcoin adapter as lightweight as possible, leaving the task of resolving forks and maintaining a correct state to the Bitcoin canister.
%While the Bitcoin adapter autonomously interacts with the Bitcoin network to download all block headers, it waits for Bitcoin canister state information before requesting  blocks.

%\subsubsection{Interaction with the Bitcoin Canister}
\noindent\textbf{Interaction with the Bitcoin canister.}
The Bitcoin adapter receives requests from the Bitcoin canister, which contain a specific block header $\bh^*$ and a set $\mathcal{A}$ of block headers at heights greater than $h(\bh^*)$ for which the Bitcoin canister has already obtained the blocks.
Additionally, each request contains a possibly empty list $T$ of Bitcoin transactions that are supposed to be transmitted to the Bitcoin network. Details about the request parameters are provided in \S\ref{sec:bitcoin_canister}.
Given this information, the Bitcoin adapter aims to return blocks that extend the Bitcoin canister's chain (or tree). Additionally, it sends a list of upcoming block headers, if any, to inform the Bitcoin canister that more blocks need to be synced.

	\begin{algorithm}[t]
		%\fontsize{8}{10}\selectfont
% \small
\caption{Given $B_a$ and $\mathcal{B}_a$, process request $(\bh^*$, $\mathcal{A}$, $T)$.}
    \label{algo:bitcoin_adapter}
\begin{algorithmic}[1]
\FOR{$tx \in T$}
\STATE \texttt{transaction\_cache.add($tx$)}
\ENDFOR
\STATE $B \coloneqq \{\}$, $\mathcal{B} \coloneqq \{\}$, $\mathcal{N} \coloneqq \{\}$, $\bh_{cur} \coloneqq \bh^*$
\WHILE{ $\bh_{cur} \ne \bot$ \textbf{and} $|\mathcal{N}| <$ \texttt{MAX\_HEADERS} }
\IF{$\bh_{cur} \notin \mathcal{A}$ \textbf{and} $\bh_{cur}.prev \in \mathcal{A} \cup \mathcal{B}$}
\STATE $b_{cur} \coloneqq$ \texttt{get\_block($\bh_{cur}$, $B_a$, $h(\bh^*)$)}
\IF{$b_{cur} \ne \bot$ \textbf{and} \texttt{size($B$)} $<$ \texttt{MAX\_SIZE} \textbf{and} $|B| <$ \texttt{max\_blocks\_at\_height($h(\bh^*)$, $\mathcal{B}_a$)}}
\STATE $B \coloneqq B \cup \{(b_{cur}, \bh_{cur})\}$, $\mathcal{B} \coloneqq \mathcal{B} \cup \{\bh_{cur}\}$
\ENDIF
\ENDIF
\IF {$\bh_{cur} \notin \mathcal{A} \cup \mathcal{B}$}
\STATE $\mathcal{N} \coloneqq \mathcal{N} \cup \{\bh_{cur}\}$
\ENDIF
\STATE $\bh_{cur} \coloneqq$ \texttt{bfs\_next($\bh_{cur}$, $\mathcal{B}_a$)}
\ENDWHILE
\STATE \textbf{return} $[B, \mathcal{N}]$
\end{algorithmic}

	\end{algorithm}

The algorithm is defined more formally in Algorithm~\ref{algo:bitcoin_adapter}.
Let $B_a$ denote the set of locally available Bitcoin blocks and $\mathcal{B}_a$ be the local block header tree. While $\mathcal{B}_a$ is extended continuously, blocks are only added to or removed from $B_a$ when handling requests from the Bitcoin canister.
Every outbound transaction $tx \in T$ is put into a transaction cache. Transactions in this cache are advertised asynchronously to all connected Bitcoin nodes and transmitted upon request. A transaction is kept in the cache until transmitted to all connected peers or until it expires after a certain time, configured to 10 minutes. The expiry time ensures that memory is freed even if there are connected Bitcoin nodes that do not request advertised transactions.
Given that there is generally no guarantee that transactions are added to the Bitcoin mempool, this best-effort strategy is acceptable.

After handling the transactions in the request, the Bitcoin adapter
traverses its block header tree $\mathcal{B}_a$ in a breadth-first-search fashion (using the function \texttt{bfs\_next}), starting at the block header $\bh^*$.
The block and block header pairs that will be returned are collected in the set $B$, whereas set $\mathcal{B}$ is used to collect only the block headers in set $B$.
If the current block header $\bh_{cur}$ is not in the set $\mathcal{A}$ and its predecessor $\bh_{cur}.prev$ is in $\mathcal{A}$ or $\mathcal{B}$, i.e., $\bh_{cur}$ follows a block header for which the Bitcoin canister has the block or the corresponding block has been collected already, the block $b_{cur}$ is retrieved using the function \texttt{get\_block} if available. If $b_{cur}$ is not available, it is requested from the connected peers asynchronously so that the block may be served in the response to a future request. Otherwise, if the total size of blocks collected in $B$ does not exceed a certain maximum size (\texttt{MAX\_SIZE}=2MiB) and the total number of blocks in $B$ is at most a certain upper bound based on the height $h(\bh^*)$, then $(b_{cur},\bh_{cur})$ and $\bh_{cur}$ are added to $B$ and $\mathcal{B}$, respectively. Note that \texttt{MAX\_SIZE} is a soft limit in that a block that exceeds this size is still added to $B$. The maximum number of blocks is unbounded up to a certain hardcoded height and then it is reduced to $1$, i.e., only a single block may be returned. Returning multiple blocks speeds up the syncing process but returning only one block is preferable for security reasons as discussed in \S\ref{sec:security}.

If the block is not available, $\bh_{cur}$ is added to the set $\mathcal{N}$ of upcoming block headers before considering the next block header. Block headers are added to the set $\mathcal{N}$ as long as its size is below a configurable maximum size (\texttt{MAX\_HEADERS}$=100$ in production). If there are no more block headers to be processed ($\bh_{cur} = \bot$) or the set $\mathcal{N}$ reaches the maximum size, the Bitcoin adapter returns $[B, \mathcal{N}]$ to the Bitcoin canister.

\subsection{Bitcoin Canister}
\label{sec:bitcoin_canister}

The Bitcoin canister is a smart contract that provides the core Bitcoin functionality. It is a fairly complex smart contract at approximately 10,000 lines of Rust code.\footnote{See \url{https://github.com/dfinity/bitcoin-canister.}}
As shown in Figure~\ref{fig:overview}, the Bitcoin canister enables other canisters to read the Bitcoin blockchain state as well as writing to it.

Instead of storing the entire Bitcoin blockchain, which would require several hundreds of gigabytes of storage space, the Bitcoin canister merely stores the UTXO set. As discussed in \S\ref{sec:background_bitcoin}, the UTXO set is sufficient to derive the current balance of any Bitcoin address while reducing the space requirement. The downside of this approach is that fork resolution becomes challenging as it is hard to undo updates to the UTXO set.
%In the following, state management and the interactions with the Bitcoin adapter and other canisters are described in detail.

%\subsubsection{State Management}
\noindent\textbf{State management.} 
The Bitcoin canister uses the concepts introduced in  \S\ref{sec:concepts} to deal with the probabilistic finality in Bitcoin.
%Specifically, since it only stores the UTXO set, it needs to define what part of the Bitcoin blockchain is considered \emph{stable}, which means that this part of the blockchain is expected to persist.
Concretely, a block is considered \emph{stable} if it is difficulty-based $\delta$-stable for $\delta = 144$ on mainnet. As there are $144$ Bitcoin blocks on average per day, unless the difficulty target changes significantly, a block becomes stable after approximately one day. Such blocks are expected to persist.

%\begin{definition}[Anchor]
The block header $\bh^*$ at the greatest height that is considered stable is called the \emph{anchor}. 
%\end{definition}
The Bitcoin canister stores the UTXO set $U$ of the Bitcoin blockchain up to and including the anchor height, whereas \emph{all} block headers
are stored in a directed tree $\mathcal{T}$. 
While conceptually the UTXO set $U$ is simply the set of all unspent transaction outputs, the implementation uses a data structure with Bitcoin addresses as the index for an efficient retrieval of all UTXOs associated with an address.

For heights greater than $h(\bh^*)$, the Bitcoin canister additionally stores the corresponding blocks.
The block for block header $\bh$, $h(\bh) > h(\bh^*)$, is denoted by $b(\bh)$. If the block is not available, then $b(\bh) = \bot$.
Since unstable blocks are stored separately, both the UTXO set $U$ and the unstable blocks must be considered to determine the current UTXOs of an address.
Consequently, the computational complexity to retrieve all UTXOs or compute the balance of an address grows linearly with the parameter $\delta$. Hence, there is a trade-off between the computational complexity and security as a larger $\delta$ makes it less likely that blocks at heights lower than $h(\bh^*)$ are affected by a block reorganization. Note that the Bitcoin canister can cope with any block reorganization at heights greater than $h(\bh^*)$ automatically. Conversely, a reorganization at a lower height would require a manual canister upgrade as the UTXO set would need to be updated.
Setting $\delta = 144$ is a conservative choice, aiming for high security, i.e., a low risk of being affected by a block reorganization, while still guaranteeing a fast processing of requests.

%Initially, the UTXO set $U$ is empty, and the tree contains the genesis block header.
%The anchor is a fictitious block header at height $-1$ with a difficulty of $1.0$.
%The process to update the internal state is discussed next.

%\subsubsection{Interaction with the Bitcoin Adapter}
\noindent\textbf{Interaction with the Bitcoin adapter.} 
The Bitcoin canister periodically requests updates from the Bitcoin adapter by sending a message containing the anchor $\bh^*$ and a list $\mathcal{A}$ of the block headers in $\mathcal{T}$ for which it has already received the corresponding blocks, i.e., $\mathcal{A} \coloneqq \{\bh \in \mathcal{T} \;|\; b(\bh) \ne \bot\}$.
The request also contains the set $T$ of Bitcoin transactions that are to be advertised to the Bitcoin network.

\begin{algorithm}[t]
  \small
\caption{Given $U$, $\mathcal{T}$ and $\bh^*$, process response ($B$, $\mathcal{N}$).}
    \label{algo:bitcoin_canister}
\begin{algorithmic}[1]
\FOR {$(b,\bh) \in B$}
\IF {\texttt{is\_valid($b$, $\bh$)} \AND \texttt{is\_valid($\bh$, $\mathcal{T}$)}}
\STATE \texttt{append($\bh$, $\mathcal{T}$)}
\STATE $b(\bh) \coloneqq b$
\STATE $B_{next} \coloneqq \{ b(\bh) \ne \bot \;|\; \bh \in \mathcal{T}, h(\bh) = h(\bh^*)+1\}$
\STATE $b_{next} \coloneqq \arg\max_{b \in B_{next}} d_w(b)$
\WHILE {$d_w(b_{next}) / w(\bh^*) \ge \delta$}
\STATE $\bh^* \coloneqq$ \texttt{get\_header($b_{next}$)}
\STATE \texttt{process\_block($U$, $b_{next}$)}
\STATE \texttt{remove\_blocks($\mathcal{T}$, $B_{next}$)}
\STATE $B_{next} \coloneqq \{ b(\bh) \ne \bot \;|\; \bh \in \mathcal{T}, h(\bh) = h(\bh^*)+1\}$
\STATE $b_{next} \coloneqq \arg\max_{b \in B_{next}} d_w(b)$
\ENDWHILE
\ENDIF
\ENDFOR
\FOR {$\bh \in \mathcal{N}$}
\IF {\texttt{is\_valid($\bh$, $v$)}}
\STATE \texttt{append($\bh$, $\mathcal{T}$)}
\ENDIF
\ENDFOR
\STATE $\mathcal{A} \coloneqq \{\bh \in \mathcal{T} \;|\; b(\bh) \ne \bot\}$
\STATE $synced \coloneqq \max_\mathcal{\bh \in \mathcal{T}} h(\bh)- \max_{\bh \in \mathcal{A}} h(\bh) \le \tau$
\end{algorithmic}

\end{algorithm}

The response of the Bitcoin adapter is a set $B$, containing pairs consisting of blocks and their headers, and a set $\mathcal{N}$ of block headers.
Algorithm~\ref{algo:bitcoin_canister} shows how the Bitcoin canister handles a response $(B, \mathcal{N})$ received from the Bitcoin adapter.

For each pair $(b,\bh)\in B$, it is verified that both $b$ and $\bh$ are valid. To this end, the Bitcoin canister performs the same checks on the block headers as the Bitcoin adapter (see \S\ref{sec:bitcoin_adapter}). A block $b$ is valid if the corresponding block header $\bh$ is valid and $b$ is well-formed, $\bh$ points to a predecessor for which the block is available, and the Merkle tree root hash of $b$ is the hash in $\bh$.
Furthermore, it is verified that $\bh$ is a valid extension of the block headers in $\mathcal{T}$.
Note that the validity of the transactions is not verified. The Bitcoin canister relies on the proof of work that goes into the blocks and the verification of the blocks in the Bitcoin network.
Transaction validation is omitted because a bug in the transaction verification logic is deemed a bigger security risk than relying on the vetting of blocks in the Bitcoin network. Moreover, the notion of $\delta$-stability adds a level of protection against forks.

If both $b$ and $\bh$ are valid, $\bh$ is added to $\mathcal{T}$ and $b$ is stored. Next, it is verified whether the addition of $b$ renders any block $b_{next}$ at height $h(\bh^*)+1$ difficulty-based $\delta$-stable with respect to $b(\bh^*)$ in which case the block header of block $b_{next}$ becomes the new anchor. Whenever the anchor changes, the UTXO set $U$ and the tree $\mathcal{T}$ must be updated as well: The UTXO set $U$ is updated by processing the transactions in $b_{next}$ and then block $b_{next}$ is discarded (i.e., $\beta(b_{next}) \coloneqq \bot$).
These steps are performed in a loop as more than one block may become stable with the addition of a single block.
Unlike blocks, block headers are kept forever; however, if there are multiple block headers at a stable height, all but the single stable block header are removed from $\mathcal{T}$.
Finally, all validated block headers in the received set $\mathcal{N}$ are appended to $\mathcal{T}$.

Since it is risky to provide outdated information about the blockchain state, the Bitcoin canister only responds to requests if the maximum height in $\mathcal{T}$ does not exceed the maximum height of available blocks by more than a parameter $\tau$ ($=2$ in production). If this condition is not satisfied, the Bitcoin canister replies with an error to each request. This mechanism explains the addition of the block header set $\mathcal{N}$ to the response of the Bitcoin adapter: it informs the Bitcoin canister about missing blocks in a tamper-proof manner.
%\newpage

%\subsubsection{Application Programming Interface}
\noindent\textbf{Application programming interface.} 
The goal is to provide a simple, low-level interface that is powerful enough to enable complex Bitcoin smart contracts.
The two core functions to read from and write to the Bitcoin blockchain state are called \texttt{get\_utxos} and \texttt{send\_transaction}, respectively. The API contains several additional functions, such as a convenience function to get the balance of an address (\texttt{get\_balance}), which are omitted for the sake of brevity.

When calling the \texttt{get\_utxos} endpoint, a Bitcoin address and network (mainnet, testnet, or regtest) must be specified. Optionally, a \emph{filter} can be provided as well, which is either a certain \emph{number of confirmations} or a request for a specific \emph{page}. The response from the Bitcoin canister consists of a set of UTXOs of the requested Bitcoin address for the given network, the hash and height of the block header at the tip of the considered chain, plus a \emph{next page reference}, which is non-empty if the response does not contain all UTXOs. This pagination mechanism is required for addresses that hold a large number of UTXOs. The UTXOs are returned sorted by block height in descending order, ensuring the correctness of the pagination mechanism.

As described in \S\ref{sec:concepts}, the blockchain is defined as the chain that maximizes $d_w(b_g)$, where $b_g$ denotes the genesis block. If the request specifies a minimum of $c$ confirmations, only confirmation-based $c$-stable blocks are considered.
It is important to note that requests for $c > \delta = 144$ are rejected as the returned set of UTXOs may not be correct in that case: It is possible that the response is missing some outputs because transactions that spend these outputs should not be considered for the given choice of $c$; however, this information is not contained in the UTXO set.

%For the sake of convenience, the Bitcoin canister also exposes a \texttt{get\_balance} endpoint, accepting the same parameters except for the \emph{page} filter. It sums up the value in all UTXOs of the given address and returns the computed sum.

The \texttt{send\_transaction} endpoint takes two parameters: the serialized Bitcoin transaction and the target network.
After performing basic checks to ensure that the received bytes encode a syntactically correct transaction%---without verifying the validity of the transferred amounts or the correctness of the signatures---
, the transaction is included in the set $T$ of transactions that is forwarded to the Bitcoin adapter as part of its regular update requests.

%The fact that the Bitcoin network fees vary over time necessitates an endpoint to provide information about current fees: The 
%\texttt{get\_current\_fee\_percentiles} endpoint returns fee percentiles, %measured in millisatoshi per \emph{virtual byte}\footnote{See \url{https://en.bitcoin.it/wiki/Weight_units}.}, of the last 10,000 transactions, typically corresponding to the last 3-10 blocks. This information is extracted from the most recent available blocks on the current blockchain.

% !TEX root = ./bitcoin_integration.tex
\section{Evaluation}
\label{sec:evaluation}

%The evaluation is split into two parts. First, security considerations are discussed in \S\ref{sec:security}, followed by a presentation of measurements of the integration running on mainnet in \S\ref{sec:measurements}.

\subsection{Security Considerations}
\label{sec:security}
%\noindent\textbf{Security Considerations.} 
%TODO: in contrast to bridge-based solutions, only two systems need to be evaluated, describe attacker more precisely (can attack both networks, by taking over a fraction of both networks and derive what guarantees still hold

The availability and integrity of the Bitcoin integration functionality are the two crucial security considerations that have guided the design of the architecture. The purpose of this section is to illustrate which attack scenarios have been considered and how they are mitigated.

We start by formally defining the assumptions with respect to the attacker's control over IC and Bitcoin nodes. %, including the attacker's hash power.
Specifically, we consider a powerful attacker that can simultaneously control a large fraction of all Bitcoin nodes and the Internet Computer nodes. Moreover, the attacker has a large hash power available so that it can create arbitrary Bitcoin blocks for the same difficulty target as in Bitcoin mainnet, albeit at a lower rate than the Bitcoin network itself. Rather than specifying concrete parameters, we define the extent of the attacker's power in the course of this section, together with explanations where necessary.

The Bitcoin canister runs in a subnet comprising $n$ nodes, where the Bitcoin adapter on each node connects to $\ell$ randomly selected Bitcoin nodes.
We make the standard assumption that the attacker controls less than $n/3$ of these nodes.
Let $\varphi$ be the fraction of Bitcoin nodes under the attacker's control. In order to cut off the Bitcoin canister from the Bitcoin network, thereby preventing updates to the state in the Bitcoin canister, the attacker must control a large share of all Bitcoin nodes, which is deemed infeasible for practical values of $n$ and $\ell$. %Specifically, we assume that the following bound on $\varphi$ holds.

\begin{definition}\label{def:connectivity} For any subnet size $n$ and number $\ell$ of links from Bitcoin adapters to Bitcoin nodes, the fraction $\varphi$ of corrupted Bitcoin nodes is upper bounded by 
%\begin{equation}\label{eq:assumption_connectivity}
$\varphi \ll n^{-1/\ell}$
(1).

%\end{equation} 
\end{definition}

Given the large number of Bitcoin nodes, this assumption easily holds for parameters used in practice, i.e., for $n=13$ and $\ell=5$, the requirement is that $\varphi \ll 0.6$. If a lower constant bound on $\varphi$ is required, it is always possible to set $\ell \in \Theta(\log(n))$, undershooting $\varphi$ by any constant factor.

The Bitcoin canister makes progress as long as \emph{at least one} Bitcoin adapter is connected to at least one correct Bitcoin node. 
%Under the assumption that each Bitcoin adapter selects peers uniformly at random, 
Definition~\ref{def:connectivity} 
%trivially 
implies that \emph{every} Bitcoin adapter connects to a correct Bitcoin node with high probability.

\begin{lemma}\label{lemma:connectivity}
If each Bitcoin adapter connects to $\ell$ Bitcoin nodes uniformly at random, then every Bitcoin adapter connects to a correct node with overwhelming probability.
\end{lemma}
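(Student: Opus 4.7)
The plan is to bound the failure probability for a single Bitcoin adapter, then apply a union bound over all $n$ adapters in the subnet and invoke the assumption from Definition~\ref{def:connectivity}.

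First, I would fix an arbitrary Bitcoin adapter and consider the event that all $\ell$ of its randomly chosen connections land on corrupted Bitcoin nodes. Since the selections are uniform (and the total pool of Bitcoin nodes is much larger than $\ell$, so sampling without replacement is well approximated by sampling with replacement, and in any case yields an upper bound of the same form), the probability of this bad event is at most $\varphi^\ell$. By Definition~\ref{def:connectivity} we have $\varphi \ll n^{-1/\ell}$, so raising both sides to the $\ell$-th power gives $\varphi^\ell \ll n^{-1}$.

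Second, I would apply a union bound over the $n$ adapters in the subnet. The adapters pick their connections independently (they run on different nodes of the subnet and each carries out the discovery process of \S\ref{sec:bitcoin_adapter} on its own), but independence is not even required for the union bound. Thus the probability that at least one adapter has \emph{all} $\ell$ of its connections to corrupted Bitcoin nodes is at most $n \cdot \varphi^\ell \ll n \cdot n^{-1} = 1$, i.e., it is $o(1)$ in $n$. Consequently, with overwhelming probability every adapter is connected to at least one correct Bitcoin node.

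The main obstacle is essentially bookkeeping rather than a mathematical difficulty: one must be careful to justify that the naive bound $\varphi^\ell$ on a single adapter's failure probability is valid despite sampling being without replacement, and that the union bound is the right tool (rather than, say, needing a Chernoff-style concentration argument). Both points are straightforward because the Bitcoin node pool is much larger than $\ell$, and because the lemma only asks that \emph{every} adapter succeeds, which is exactly the complement of the union of the per-adapter failure events. A brief remark noting that tightening the constant in Definition~\ref{def:connectivity} (e.g., $\ell = \Theta(\log n)$) would strengthen the bound from $o(1)$ failure to $n^{-\Omega(1)}$ failure would round out the argument.
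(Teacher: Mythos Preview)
Your proposal is correct and essentially matches the paper's argument: both bound the single-adapter eclipse probability by $\varphi^\ell$ and then aggregate over the $n$ adapters, invoking Definition~\ref{def:connectivity} to conclude $n\varphi^\ell \ll 1$. The only cosmetic difference is that the paper writes the aggregate failure probability as $1-(1-\varphi^\ell)^n \approx 1 - e^{-n\varphi^\ell} \approx 0$ (implicitly using independence), whereas you use a union bound; to first order these coincide, and your union bound is arguably cleaner since it does not rely on independence across adapters.
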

\begin{proof}
The probability that a Bitcoin adapter connects only to corrupted Bitcoin nodes is $\varphi^\ell$. Thus, the probability
that it is eclipsed is
%\begin{displaymath}
% 1- (1 - \varphi^\ell)^n \approx 1 - e^{-n\varphi^\ell} \stackrel{\eqref{eq:assumption_connectivity}}{\approx} 1 - 1 = 0.
%\end{displaymath}
$
 1- (1 - \varphi^\ell)^n \approx 1 - e^{-n\varphi^\ell} \stackrel{(1)}{\approx}1 - 1 =0.
$
 \end{proof}
 
If the Bitcoin canister is almost certain to remain connected to correct Bitcoin nodes, an attacker can only try to corrupt the state of the Bitcoin canister. As discussed in \S\ref{sec:bitcoin_adapter}, the Bitcoin adapter only accepts \emph{valid} block headers and blocks, which makes it impossible for an attacker to flood the Bitcoin canister with invalid data.

Since the Bitcoin canister does not verify that the spending conditions of transactions are satisfied, an attacker can attempt to feed the Bitcoin canister valid blocks with manipulated transactions. However, this is a costly attack as valid blocks require a certain proof of work.
Quite generally, there is always a chance that the attacker mines a block before other miners, even if the attacker's hash rate is substantially smaller than the total hash rate. As a result, it is necessary to wait for a certain number of confirmations, reducing the risk of a reorganization that removes the block containing the transaction in question.

By the same principle we define that any critical actions by smart contracts that depend on the Bitcoin canister require $c^*$ confirmations, where $c^*$ is large enough so that the attacker is not able to create a fork with a height that exceeds the ``real'' blockchain's height by $c^*$ at the same difficulty level, i.e., the attacker may create a longer chain only at a reduced difficulty.
% This constraint on the attacker's hash power is captured in the following definition.

\begin{definition}\label{def:hash_rate}
Given blockchain $B$ of height $h$ and a constant $c^*$, the attacker's hash rate is bounded so that the height $h'$ of the attacker's blockchain $B'$ is less than $h+c^*$ or $d_w(B') < d_w(B)$ at all times with overwhelming probability.
\end{definition}

This is a reasonable assumption as otherwise the attacker can launch (double-spend) attacks against any service that works with Bitcoin, including centralized exchanges.

An attacker may attempt to corrupt the state of a smart contract by manipulating the state of the Bitcoin canister such that the attacker's fork is considered to have $c^*$ confirmations.
This is infeasible under the assumption of Definition~\ref{def:hash_rate}.

\begin{lemma} The probability of a state corruption for Bitcoin services on the IC requiring $c^*$ confirmations is negligible.\label{lemma:state_corruption}
\end{lemma}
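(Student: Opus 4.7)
The plan is to combine Lemma~\ref{lemma:connectivity} with Definition~\ref{def:hash_rate} and Definition~\ref{def:stability} to show that a state corruption requires an event that occurs with only negligible probability. At a high level, the Bitcoin canister's reported state depends on which blocks it treats as $c^*$-stable, so to corrupt the state an attacker must either prevent the canister from learning the honest chain or cause the canister to accept a block on its own fork as $c^*$-stable.

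First, I would establish that the canister reliably tracks the honest chain. By Lemma~\ref{lemma:connectivity}, every Bitcoin adapter on an honest IC node connects to at least one correct Bitcoin node with overwhelming probability; because adapters validate headers locally and the IC subnet tolerates up to $n/3$ faulty nodes via Byzantine-fault-tolerant consensus, the honest chain $B$ is reliably delivered into the canister's block header tree $\mathcal{T}$ and participates in the stability computation performed in Algorithm~\ref{algo:bitcoin_canister}.

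Next, I would show that the attacker cannot make a conflicting block $c^*$-stable. By Definition~\ref{def:stability}, a block $b'$ on the attacker's fork $B'$ at a contested height is $c^*$-stable only if the depth gap to the corresponding honest block is at least $c^*$. For confirmation-based stability this forces $B'$ to be at least $c^*$ blocks taller than $B$; for difficulty-based stability it forces $d_w(B')$ to exceed $d_w(B)$ by the analogous amount. Definition~\ref{def:hash_rate} rules both out simultaneously: either the attacker's chain satisfies $h' < h + c^*$, in which case the confirmation-based gap cannot be opened, or $d_w(B') < d_w(B)$, in which case the canister's rule that the current blockchain maximizes $d_w$ keeps $B$ selected and every block on $B'$ has non-positive relative depth.

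A union bound over the two negligible-probability bad events (all adapters are eclipsed, or the hash rate bound is violated) then yields the claim. The main subtlety I anticipate is that the attacker might switch strategies across the two depth functions, briefly satisfying one condition while transiently violating the other; however, the \emph{at all times} quantifier in Definition~\ref{def:hash_rate} means the disjunction holds throughout the attack, so the case split closes without needing to reason about intermediate canister states or the precise timing of the anchor update in Algorithm~\ref{algo:bitcoin_canister}.
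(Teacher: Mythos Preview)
Your proof is correct and follows essentially the same case split on Definition~\ref{def:hash_rate} as the paper: either $d_w(B') < d_w(B)$, so difficulty-based stability keeps $B$ as the current chain and the fork is ignored, or the fork's height falls short of $h_{max}+c^*$, so any adversarial block $b'$ with $c^*$ confirmations on $B'$ sits at a height where an honest block $b$ exists and $d_c(b')-d_c(b)<c^*$, defeating confirmation-based $c^*$-stability. Your explicit invocation of Lemma~\ref{lemma:connectivity} to guarantee that the honest chain is present in the canister's view is a sensible strengthening; the paper's proof simply grants the attacker the ability to inject any valid block and tacitly assumes $B$ is already available, then proceeds directly to the two-case analysis without a union bound.
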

\begin{proof}
We assume that the attacker has the means to send any (valid) block to the Bitcoin canister. Furthermore, we assume that there is a corrupting transaction in a block $b'$ at a height $h'$ on a forked chain $B'$ created by the attacker.

If this chain has a maximum height of $h_{max}+c^*$ instead of the maximum height $h_{max}$ of the real blockchain $B$, then $d_w(B') < d_w(B)$ due to Definition~\ref{def:hash_rate}. Since difficulty-based stability is used to identify the current blockchain, the Bitcoin canister ignores the attacker's fork and consequently does not consider the corrupting transaction in any response.

If $d_w(B') > d_w(B)$, the attacker's chain has a maximum height of less than $h_{max} + c^*$. If $b'$ has at least $c^*$ confirmations on $B'$, it follows that $h' < h_{max}$, i.e., there is a block $b$ at height $h'$ on the real blockchain $B$.
Moreover, we have that $d_c(b') - d_c(b) < c^*$, implying that $b'$ is not confirmation-based $c^*$-stable. As confirmation-based stability determines the number of confirmations, the Bitcoin canister never reports $c^*$ or more confirmations for the corrupting transaction.
\end{proof}

Given a conservative upper bound on the hash power of the attacker as specified in Definition~\ref{def:hash_rate}, Lemma~\ref{lemma:state_corruption} illustrates how the notion of stability helps to overcome the probabilistic nature of Bitcoin. If $\delta$ is chosen large enough, the attacker effectively requires a commanding share of all hash power to launch an attack that causes a state corruption. In this scenario, the attacker has the power to undermine the integrity of most Bitcoin services. As mentioned before, $\delta = 144$ is a conservative choice, which means that the attacker must create $144$ blocks \emph{more} than the Bitcoin network over any period of time to corrupt the Bitcoin canister state.

While the state of the Bitcoin canister is considered safe when it is running and fully synced, there is added risk when syncing the Bitcoin blockchain, either initially or after an extended downtime of the Bitcoin canister, causing the Bitcoin canister to be out of sync. It is important to note that the latter situation has never occurred but security measures for this possibility are in place nonetheless.
If the attacker knows that the Bitcoin canister will not sync beyond a specific block height $h^*$ until time $t$ and $t$ lies sufficiently far in the future, the attacker can use the time to build a fork of significant length starting at height $h^*+1$ even if the attacker's hashing power is a small fraction of the total hashing power of the Bitcoin network.\footnote{For example, at 1\% of the total hashing power, the attacker can mine 10 blocks in expectation in a week at the difficulty level of the Bitcoin network.}
The attacker would then try to get $c^*$ blocks accepted by the Bitcoin canister \emph{before} it learns about the blocks on the real blockchain.

This risk is mitigated by the correct Bitcoin adapters, which send the set $\mathcal{N}$ of block headers at greater heights in their responses, ensuring that the Bitcoin canister does not enter the synced state prematurely.
Thus, even if the attacker manages to serve the blocks on the fork, the Bitcoin canister does not act upon them. Once the Bitcoin canister is synced, the attacker's fork will not be longer by $c^*$ blocks by assumption.

The risk is greater after a (hypothetical) downtime of the Bitcoin canister.
Since less than a third of the nodes in the subnet might be malicious, the attacker may attempt to use these nodes to ingest a fork of length at least $c^*$ as soon as the Bitcoin canister is operational again. The consensus algorithm of the IC ensures that the next block maker cannot be predicted, and it is the block maker that proposes the IC block containing the Bitcoin block. Thus, the attack succeeds if malicious IC nodes are chosen as the block makers, forwarding the blocks on the attacker's fork to the Bitcoin canister while claiming that there are no further block headers (i.e., $\mathcal{N} = \{\}$).
%This risk is mitigated in that the Bitcoin canister only accepts one block at a time after a certain block height.

\begin{lemma} If the attacker controls $f < n/3$ nodes of the subnet, the probability of a state corruption for the IC Bitcoin services requiring $c^*$ confirmations after downtime is $<3^{-c^*}$.
\end{lemma}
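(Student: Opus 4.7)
The plan is to translate the informal description that precedes the lemma into a probabilistic bound over the sequence of IC block makers. First, I would note that for the attack to succeed the Bitcoin canister must be driven onto the attacker's fork to the point where a transaction accrues $c^*$ confirmations, without ever receiving a response that reveals the existence of the real chain; once a single response carrying headers of the real chain arrives, either $\mathcal{N}$ becomes non-empty or a block of greater difficulty-weight appears in $B$, which by Algorithm~\ref{algo:bitcoin_canister} and the synced-state condition with threshold $\tau$ from \S\ref{sec:bitcoin_canister} prevents the anchor from advancing onto the fork.

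Next, I would formalize the scheduling step. As observed in the paragraph preceding the lemma, the IC consensus protocol guarantees that the identity of the next block maker is unpredictable and is chosen independently at each round from the $n$ subnet nodes. By Lemma~\ref{lemma:connectivity}, every honest IC node runs a Bitcoin adapter that, with overwhelming probability, is connected to at least one honest Bitcoin peer, so whenever an honest node is block maker the response forwarded to the Bitcoin canister contains headers from the real chain. Consequently the attacker's only avenue is to have the block maker be one of its $f < n/3$ corrupted nodes, an event occurring with probability at most $f/n < 1/3$ per round.

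Combining these observations, the attack succeeds only if the block makers of $c^*$ consecutive relevant IC consensus rounds are all malicious (so that the canister consumes only attacker-crafted responses with $\mathcal{N} = \{\}$). By the independence of block-maker selection, this joint event has probability at most $(f/n)^{c^*} < (1/3)^{c^*} = 3^{-c^*}$, which is the bound claimed by the lemma.

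The main obstacle I expect is the step that justifies why exactly $c^*$ consecutive malicious block makers are required, rather than some smaller or larger number. Addressing it cleanly means unpacking Algorithm~\ref{algo:bitcoin_canister}: an honest response contributes real-chain headers to $\mathcal{N}$ (and possibly blocks to $B$) with heights beyond those of the fork, which either prevents the canister from ever entering the synced state on the fork or, via the $\delta$-stability check against $d_w$, keeps the anchor pinned off the fork. I would also note explicitly that the IC consensus protocol's unpredictability and independence of block-maker selection across rounds are standard properties of the protocol of~\cite{camenisch2022}, so the product bound is legitimate; otherwise a union-bound weakening would still yield $c^* \cdot (f/n)$, which is weaker than needed.
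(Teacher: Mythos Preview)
Your proposal is correct and follows essentially the same line as the paper's proof: an honest block maker leaks real-chain headers via $\mathcal{N}$ (by Lemma~\ref{lemma:connectivity}), so the attack requires $c^*$ consecutive malicious block makers, each occurring with probability $f/n<1/3$, yielding the bound $3^{-c^*}$.

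The one point where the paper is crisper is precisely the obstacle you flag. The paper does not route through the synced-state threshold or the $d_w$-based anchor check to argue why $c^*$ rounds are needed; instead it invokes the restriction (from \texttt{max\_blocks\_at\_height} in Algorithm~\ref{algo:bitcoin_adapter}) that once the canister is past the hard-coded height, \emph{only one block is accepted per response}. Hence each malicious block maker can advance the fork by at most one block, and $c^*$ confirmations on the fork force $c^*$ consecutive malicious rounds. Incorporating that fact would let you discharge your obstacle in one sentence rather than unpacking Algorithm~\ref{algo:bitcoin_canister}.
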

\begin{proof}
Since the Bitcoin canister only accepts one block at a time, each malicious block maker can only deliver one block on the fork to the Bitcoin canister.
If there is any round where a correct IC node is chosen as the block maker, it will provide the list $\mathcal{N}$ of correct block headers as each Bitcoin adapter is connected to correct Bitcoin nodes due to Lemma~\ref{lemma:connectivity}. Since the maximum height of the attacker's fork does not exceed the maximum height of the real blockchain by more than $c^*-1$ by assumption,
the attack only succeeds if malicious block makers are chosen $c^*$ times in a row. As the attacker controls less than $n/3$ IC nodes, the claim follows.
\end{proof}

The probability of $3^{-c^*}$ may appear high for a customary choice of $c^*$ but the attack also requires the corruption of many IC nodes in addition to a predictable downtime of the Bitcoin canister. As a result, such an attack is deemed highly unlikely.

\subsection{Measurements}
\label{sec:measurements}
%\noindent\textbf{Measurements.} 

This section studies the resource consumption of the Bitcoin canister deployed on IC mainnet and the interaction with the real Bitcoin network, in terms of the state size and the number of \emph{WebAssembly instructions} required to maintain the state and handle requests, as well as the latency experienced by users when interacting with the Bitcoin canister. We omit a discussion of throughput capacity due to space constraints and the fact that capacity can be increased linearly on demand by hosting Bitcoin canisters on more subnets.

\ifdefined\SRDS
\else
	\begin{figure}[t]
	\includegraphics[width=\columnwidth]{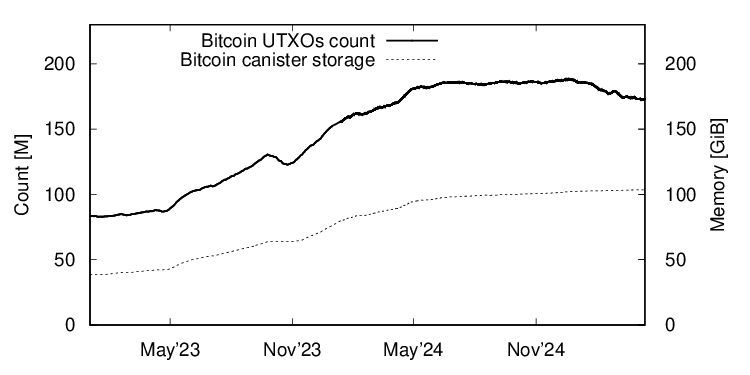}
	 \caption{The growth of the UTXO set and the Bitcoin canister space consumption is shown over the span of two years.}
	 \label{fig:memory_utxos}
\end{figure}
\fi

\begin{figure*}[t]
	\includegraphics[trim=14 8 8 8,clip=true,width=0.5\textwidth]{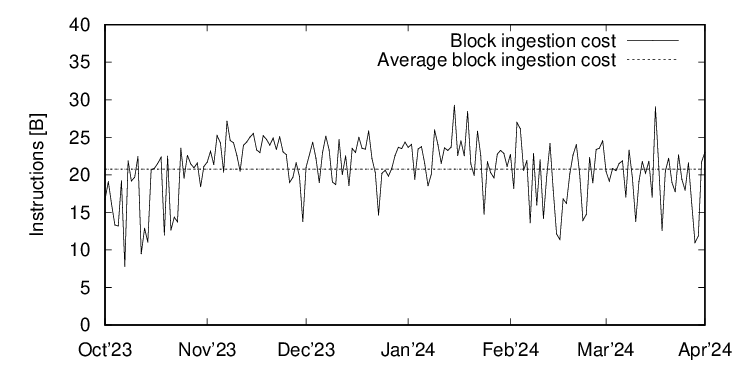}
	\includegraphics[trim=14 8 8 8,clip=true,width=0.5\textwidth]{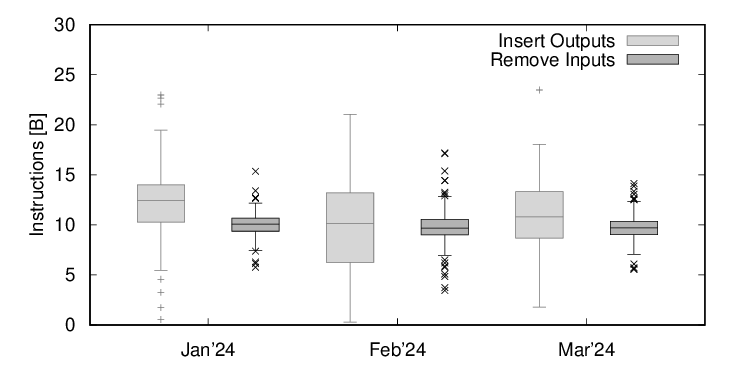}
	\vspace{-.2cm}	
			\caption{ 
				\textit{Left:} Number of instructions and the average of 21.6b instructions per block when ingesting blocks over a six-month period (Oct 2023--Apr 2024).
				\textit{Right:} Number of instructions executed for output insertions and input removals in Q1 2024.}
\label{fig:blocks}
%\vspace{-.4cm}
\end{figure*}

%\begin{figure}[t]
%  \includegraphics[width=\columnwidth]{figures/hist_utxo}
%    \caption{Histogram of UTXO set size in experiment data. }
%    \label{fig:hist_utxo}
%\end{figure}

%\begin{figure}[t]
%  \includegraphics[width=\columnwidth]{figures/boxplot_time}
%    \caption{User-perceived response time for replicated requests and query requests answered by a random node. }
%    \label{fig:boxplot_time}
%\end{figure}

\noindent\textbf{Storage and block ingestion.} 
The Bitcoin canister stores the whole UTXO set. Thus, its storage requirement grows linearly with the size of the UTXO set. By the end of March 2025, the Bitcoin canister reached a size of more than 103 GiB, storing more than 170 million UTXOs as shown in
\ifdefined\SRDS
	Figure~\ref{fig:blocks} (left).
\else
	Figure~\ref{fig:memory_utxos}.
\fi
The number of instructions executed for block ingestion varies with the size of the block as evident in 
\ifdefined\SRDS
	Figure~\ref{fig:blocks} (right).
\else
	Figure~\ref{fig:blocks}. 
\fi
It is typically in the order of 20 billion instructions, with roughly half of them used for output insertions and input removals, respectively.

\noindent\textbf{Latency and cost of handling requests.} 
In order to evaluate the time required to handle user requests and the resource consumption measured in instructions, we conducted the following experiments on IC mainnet. %We built and deployed a canister which makes (replicated) balance and UTXO requests to the bitcoin canister for the Bitcoin mainnet.\footnote{\url{https://github.com/dfinity/examples/tree/master/rust/basic_bitcoin}}
We selected 1000 random bitcoin addresses that appeared in blocks in Q1 2024 with a positive balance. %The distribution of the UTXO set size for these addresses is depicted in Figure~\ref{fig:hist_utxo}. 
The distribution of their UTXO set sizes is skewed with 517 having fewer than 50 UTXOs, 159 addresses returning sets of 50-199 UTXOs, 113 addresses returning 200-999 UTXOs, and 211 address having 1000 or more UTXOs.
For each of these addresses, we sent \emph{replicated} balance and UTXO requests to the Bitcoin canister and measured the time to receive a response and the resource consumption. In addition, we sent balance and UTXO \emph{query} requests. Every response to a query request comes from a single, randomly selected node on the subnet and therefore cannot be fully trusted. By contrast, the responses for the replicated requests are threshold-signed by more than two thirds of the nodes of the subnet.
We repeated the experiments with different address sets for the same time period and obtained similar results.

%Figure~\ref{fig:boxplot_time} shows how long it takes for requests to be answered.
On average, replicated requests take below 10s to be answered, with the minimum around 7s and a $90^\mathit{th}$ percentile of 18s. %(numbers are the same for new dataset)
For queries, which neither go through consensus nor require communication across subnets, the median time to get a balance or UTXOs is about 220ms and 310ms, respectively, and 90\% of the response times are below 0.5s and 2.5s, with considerably higher variance for UTXO requests.

%\begin{figure}[t]
%  \includegraphics[width=\columnwidth]{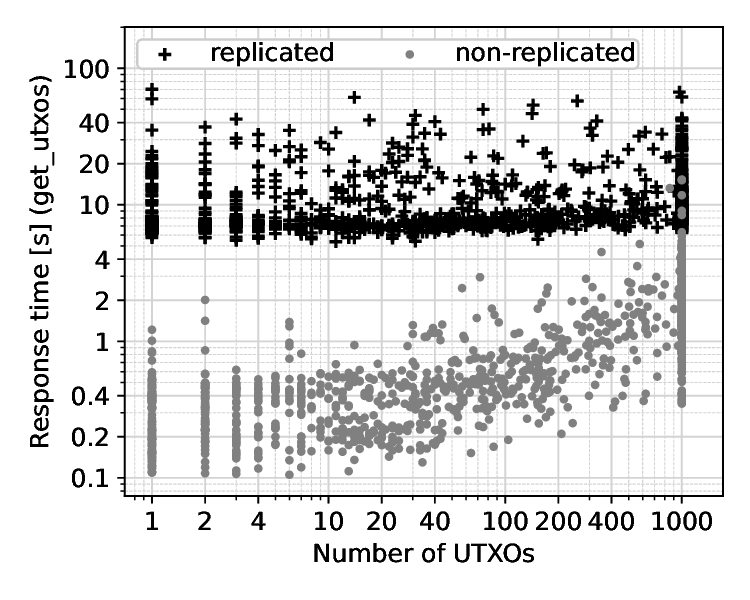}
%    \caption{User-perceived response time for replicated requests and query requests answered by a random node.}
%    \label{fig:size_time}
%\end{figure}

Figure~\ref{fig:utxo_size} (left) illustrates that the response time for UTXO and balance requests depends on the size of the UTXO set. This observation is more visible for queries than replicated requests because in the latter the response time is dominated by the time necessary for consensus and other protocol overhead.
We derived the number of instructions executed from the cost of replicated UTXO requests.
Figure~\ref{fig:utxo_size}  (right) shows that the number of executed instructions varies between $5.84\cdot10^6$ and $4.76\cdot10^8$ with a clear correlation to the size of the response. The bifurcation in the figure is due to the distinction between stable and unstable blocks: UTXOs in unstable blocks can be fetched more quickly compared to UTXOs that have been migrated to the large UTXO set.

%\begin{figure}[t]
%  \includegraphics[width=0.98\columnwidth]{figures/utxos_cycles}
%    \caption{ Number of cycles consumed for replicated requests. }
%    \label{fig:size_cycles}
%\end{figure}

\begin{figure*}[t]
	\centering
	\begin{subfigure}{.32\textwidth}
	\hspace{-.2cm}
	  \includegraphics[trim=14 14 12 12,clip=true,width=\textwidth]{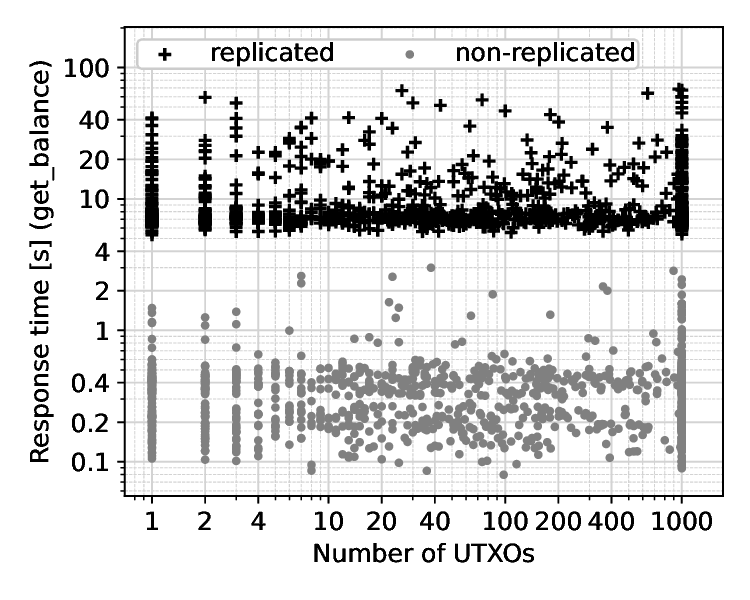}
	  \end{subfigure}
	  \hspace{.1cm}
	\begin{subfigure}{.32\textwidth}
	  \centering
	  \includegraphics[trim=14 14 12 12,clip=true,width=\textwidth]{figures/utxos_time}
	  \end{subfigure}
	  \hspace{.1cm}
	\begin{subfigure}{.32\textwidth}
	  \centering
	  \includegraphics[trim=14 14 12 12,clip=true,width=\textwidth]{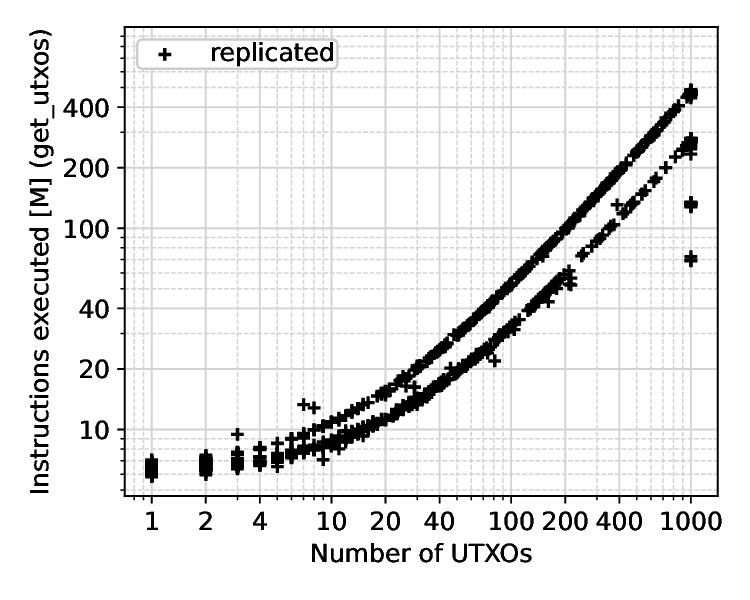}
	  \end{subfigure}
	\caption{\textit{Left/center:} Response time for replicated and non-replicated requests for get\_balance and get\_utxos respectively. \textit{Right:} Instructions executed for replicated UTXO requests.}
	\label{fig:utxo_size}
	%\vspace{-.4cm}
	\end{figure*}

At the current exchange rate, approximately 35,000 (1,500) requests for balances (UTXOs) can be made for 1 U.S.\ dollar. By comparison, the average fee for a single Bitcoin transaction was about 1-2 U.S.\ dollars at the end of 2024.

%\begin{figure}[t]
%  \includegraphics[width=\columnwidth]{figures/cycles_equilibrium}
%    \caption{Cycles gain and cost over the four-month period from Sep-Dec 2023. }
%    \label{fig:cycles_equilibrium}
%\end{figure}
%
%\noindent\textbf{Usage and sustainability.} 
%While the baseline resource consumption for block ingestion does not grow, Figure~\ref{fig:cycles_equilibrium} shows how the total resource consumption of the Bitcoin canister in cycles develops in comparison to the number of cycles users payed for processing requests (cumulative and bi-hourly data points). 
%An uptick in usage can be observed starting in October 2023, leading to a point in time when the gain started to be higher than the cost of running the Bitcoin canister on December 7, 2023 (vertical line). In December the canister usage increased significantly, reducing the accumulated deficit by 11\% from (190T cycles to 169T cycles) in a single month.
%Note that the massive usage increase happened \emph{after} December 7, 2023., i.e., the Bitcoin canister already reached a point of self-sustainability already before, which implies that this amount of load is not required perpetually.

% !TEX root = ./bitcoin_integration.tex
\section{Related Work}
\label{sec:related_work}

The idea of Bitcoin smart contracts can be traced back at least to 2011, where, e.g., escrow and assurance contracts using Bitcoin were proposed.\footnote{See the history of \url{https://en.bitcoin.it/wiki/Contract}.}
In general, work on this topic can be partitioned into three different approaches. The first approach is to build smart contracts strictly using the power of the Bitcoin scripting language. This line of work includes custom languages that facilitate the specification of smart contracts and their compilation into valid Bitcoin transactions. A related approach is to extend Bitcoin's scripting language to boost its expressiveness.
The last approach is to run some form of Bitcoin smart contracts on a different platform. We will discuss work for each of these approaches separately.

One of the first papers about Bitcoin smart contracts introduced \emph{timed commitments}, which can be utilized, e.g., to implement 
lotteries~\cite{andrychowicz2014}. Subsequently, more general multi-party computation approaches~\cite{andrychowicz2014b,bentov2014,kumaresan2014} were explored. % based on Bitcoin's \emph{segregated witness} feature as defined in BIP-141\footnote{See \url{https://github.com/bitcoin/bips/blob/master/bip-0141.mediawiki}.}. %Another well-known avenue of research is payment channels~\cite{decker2015,miller2017b,poon2016}, which enable the transfer of bitcoins that are locked in channels efficiently off-chain, handling only the creation and closing of channels with on-chain transactions. 
TypeCoin~\cite{crary2015} models updates of a state machine as affine logic propositions, with liveness only guaranteed when all parties cooperate. Other languages were proposed that compile transactions of a protocol to Bitcoin scripts~\cite{atzei2018,oconner2017b}. BitML~\cite{atzei2019,bartoletti2018} can be used to write entire protocols, which are then directly compiled to a set of Bitcoin transactions. %Another advantage of BitML is that it guarantees computational soundness, which means that a property proven for the symbolic model of a smart contract directly carries over to the computational model, i.e., the execution of the scripts in the compiled transactions.

The second approach concerns extensions to the scripting language, typically in the form of new opcodes. A \emph{covenant} is a primitive that allows transactions to restrict how the value they transfer is used in the future~\cite{moser2016, oconnor2017}. Covenants can be used to implement, e.g., vaults and \emph{poison transactions} to penalize double-spending attacks.  Recursive covenants make it possible to implement a state machine that stores a certain state through a series of transactions. 
There are several other proposals introducing opcodes to enable advanced smart contracts~\cite{delgado2020,kumaresan2015,miller2017}. An alternative is to 
introduce malleability of transaction inputs~\cite{bartoletti2017}.
Both the first and second approach suffer from Bitcoin's high costs and latency.

This work belongs to the third category of running Bitcoin smart contracts on a different platform. While there is little literature on this approach, numerous other blockchain-based platforms have been built in recent years that aim to offer Bitcoin smart contracts, such as Stacks\footnote{See \url{https://stx.is/nakamoto}.}, Rootstock~\cite{lerner2022},  THORChain\footnote{See \url{https://thorchain.org/}.}, and WBTC\footnote{See \url{https://wbtc.network/}.} among others.
Stacks smart contracts can
%offers its own smart contract language \emph{Clarity} to write decentralized applications 
read the Bitcoin blockchain state (but not send transactions). Hashes of Stacks blocks are written into Bitcoin blocks, thus inheriting the Bitcoin latency. 
Rootstock (RSK)~\cite{lerner2022} is a sidechain which uses RBTC as its native token. RBTC is 1:1 pegged to bitcoin stored at a special address on the Bitcoin blockchain.
%Smart contracts for RSK can be written in \emph{Solidity}, Ethereum's smart contract language offering EVM compatibility. 
RSK smart contracts cannot hold native bitcoins or interact with the Bitcoin blockchain directly.
%The RSK blockchain is based on the same PoW mechanism as bitcoin, with a lower difficulty, such that a new block is produced around every 30 seconds, allowing for about 100 transactions per second. A subset of more than 50\% of the bitcoin miners also create blocks for RSK, thus the RSK community considers the security model to be similar to bitcoin. To maintain the peg, there is a bridge contract on the RSK blockchain which verifies peg-in requests and command peg-outs. To achieve this functionality, the bridge contract confirms transactions by their block headers, but does not check that the chain is valid (similar to the validity checks performed by the ICP). 
More than half of the 13 federation members need to multi-sign transactions to exchange RBTC into bitcoins. 
THORChain is a cross-chain protocol to support swaps between different blockchains via threshold ECDSA signatures, including Ethereum and Bitcoin.
Its threshold ECDSA signature protocol relies on a synchronous network assumption, making it vulnerable to node crashes and bad networking conditions. Thus, its applicability for global applications is questionable. Since the launch of its mainnet Ethereum integration, THORChain has lost millions of dollars due to successful hacks.
WBTC is a protocol for creating Bitcoin-backed wrapped tokens on various platforms, including Ethereum and Arbitrum. The protocol is rather centralized as the custodian is enacted by BitGo.
BitGo uses a multi-signature address to control the funds; however, it is not clear who controls the different keys.

The third category also includes off-chain computing with Bitcoin payments, e.g., using Bitcoin for contingent payments~\cite{banasik2016}. 
BitVM\footnote{See \url{https://bitvm.org/bitvm.pdf}.} and  FastKitten~\cite{das2019} propose to combine off-chain smart contract computing tied to Bitcoin with deposits and on-chain dispute resolution, which incentivize the correct behavior of all parties involved.
BitVM is limited to two parties, a prover and a verifier, both of which need to be online and exchange data off-chain.  The prover makes a claim that a given function evaluates for some particular inputs to some specific output together with a deposit. If a verifier submits evidence that this claim is wrong, the verifier obtains the prover's deposit.
FastKitten relies on trusted execution environments (TEEs) and thus provides confidentiality and integrity unless the TEE has been broken. FastKitten guarantees that all honest parties obtain the correct amount after execution or are reimbursed. 
It requires all parties to be known during the setup phase and they need to interact with both Bitcoin and the TEE operator repeatedly in bounded time. 
In summary, the other approaches in the third category suffer from weaker security guarantees and restrictions on the smart contracts.

% !TEX root = ./bitcoin_integration.tex
\section{Conclusion}
\label{sec:conclusion}
In this paper, we have illustrated how general-purpose Bitcoin smart contracts can be executed on the Internet Computer. 
The underlying architecture is based on novel concepts such as the interconnection of the networks at the node layer and the storage of the Bitcoin blockchain state in a smart contract for quick and reliable access. 
In contrast to other approaches, this enables smart contracts to read and write to the Bitcoin state in a secure manner, fast and at low cost.
We conjecture that the Bitcoin canister is the smart contract with the largest size in existence at a state size exceeding 87 GiB.
Since the integration went live, the Bitcoin canister was queried nearly 8 million times, and it received approximately 1,900 transactions from smart contracts, which were forwarded to the Bitcoin network.
Although the integration is tailored to the capabilities of the Internet Computer, certain aspects of the architecture may prove to be useful for similar endeavors.

As the Bitcoin canister returns signed responses, verifiable by any entity that knows the public key of the corresponding subnet, it provides a \emph{trustworthy decentralized view} of the Bitcoin blockchain state. To the best of our knowledge, this is another unique property of this integration.
Thus, the presented integration provides new functionality that can be leveraged by decentralized applications as well as traditional blockchain-centric web services.

%\balance
\bibliographystyle{IEEEtran}
\bibliography{bitcoin_integration}

% Generated by IEEEtran.bst, version: 1.14 (2015/08/26)
\begin{thebibliography}{10}
\providecommand{\url}[1]{#1}
\csname url@samestyle\endcsname
\providecommand{\newblock}{\relax}
\providecommand{\bibinfo}[2]{#2}
\providecommand{\BIBentrySTDinterwordspacing}{\spaceskip=0pt\relax}
\providecommand{\BIBentryALTinterwordstretchfactor}{4}
\providecommand{\BIBentryALTinterwordspacing}{\spaceskip=\fontdimen2\font plus
\BIBentryALTinterwordstretchfactor\fontdimen3\font minus
  \fontdimen4\font\relax}
\providecommand{\BIBforeignlanguage}[2]{{%
\expandafter\ifx\csname l@#1\endcsname\relax
\typeout{** WARNING: IEEEtran.bst: No hyphenation pattern has been}%
\typeout{** loaded for the language `#1'. Using the pattern for}%
\typeout{** the default language instead.}%
\else
\language=\csname l@#1\endcsname
\fi
#2}}
\providecommand{\BIBdecl}{\relax}
\BIBdecl

\bibitem{szabo1996}
N.~Szabo, ``{Smart Contracts: Building Blocks for Digital Markets},''
  \emph{EXTROPY: The Journal of Transhumanist Thought}, vol.~18, no.~2, 1996.

\bibitem{lee2023}
S.-S. Lee, A.~Murashkin, M.~Derka, and J.~Gorzny, ``{SoK: Not Quite Water Under
  the Bridge: Review of Cross-Chain Bridge Hacks},'' in \emph{Proc. IEEE
  International Conference on Blockchain and Cryptocurrency (ICBC)}, 2023, pp.
  1--14.

\bibitem{groth2022}
J.~Groth and V.~Shoup, ``{Design and Analysis of a Distributed ECDSA Signing
  Service},'' Cryptology ePrint Archive, 2022/506, 2022.

\bibitem{dfinity2022internet}
{The DFINITY Team}, ``{The Internet Computer for Geeks},'' Cryptology ePrint
  Archive, 2022/087, 2022.

\bibitem{schneider1990}
F.~B. Schneider, ``{Implementing Fault-Tolerant Services Using the State
  Machine Approach: A Tutorial},'' \emph{{ACM} Comput. Surv.}, vol.~22, no.~4,
  pp. 299--319, 1990.

\bibitem{fischer1985}
M.~J. Fischer, N.~A. Lynch, and M.~Paterson, ``{Impossibility of Distributed
  Consensus with One Faulty Process},'' \emph{J. {ACM}}, vol.~32, no.~2, pp.
  374--382, 1985.

\bibitem{camenisch2022}
J.~Camenisch, M.~Drijvers, T.~Hanke, Y.-A. Pignolet, V.~Shoup, and D.~Williams,
  ``{Internet Computer Consensus},'' in \emph{Proc. ACM Symposium on Principles
  of Distributed Computing (PODC)}, 2022, pp. 81--91.

\bibitem{andrychowicz2014}
M.~Andrychowicz, S.~Dziembowski, D.~Malinowski, and {\L}.~Mazurek, ``{Secure
  Multiparty Computations on Bitcoin},'' in \emph{Proc. IEEE Symposium on
  Security and Privacy (S\&P)}, 2014, pp. 443--458.

\bibitem{andrychowicz2014b}
------, ``{Fair Two-Party Computations via Bitcoin Deposits},'' in \emph{Proc.
  International Workshops on Financial Cryptography and Data Security Workshops
  (FC)}, 2014, pp. 105--121.

\bibitem{bentov2014}
I.~Bentov and R.~Kumaresan, ``{How to Use Bitcoin to Design Fair Protocols},''
  in \emph{Annual Cryptology Conference}, 2014, pp. 421--439.

\bibitem{kumaresan2014}
R.~Kumaresan and I.~Bentov, ``{How to Use Bitcoin to Incentivize Correct
  Computations},'' in \emph{Proc. ACM SIGSAC Conference on Computer and
  Communications Security (CCS)}, 2014, p. 30–41.

\bibitem{crary2015}
K.~Crary and M.~J. Sullivan, ``{Peer-to-Peer Affine Commitment Using
  Bitcoin},'' in \emph{Proc. 36th ACM SIGPLAN Conference on Programming
  Language Design and Implementation (PLDI)}, 2015, p. 479–488.

\bibitem{atzei2018}
N.~Atzei, M.~Bartoletti, S.~Lande, and R.~Zunino, ``{A Formal Model of Bitcoin
  Transactions},'' in \emph{Proc. 22nd International Conference (FC)}, 2018,
  pp. 541--560.

\bibitem{oconner2017b}
R.~O'Connor, ``{Simplicity: A New Language for Blockchains},'' in \emph{Proc.
  Workshop on Programming Languages and Analysis for Security (PLAS)}, 2017,
  pp. 107--120.

\bibitem{atzei2019}
N.~Atzei, M.~Bartoletti, S.~Lande, N.~Yoshida, and R.~Zunino, ``{Developing
  Secure Bitcoin Contracts with BitML},'' in \emph{Proc. 27th ACM Joint Meeting
  on European Software Engineering Conference and Symposium on the Foundations
  of Software Engineering}, 2019, pp. 1124--1128.

\bibitem{bartoletti2018}
M.~Bartoletti and R.~Zunino, ``{BitML: A Calculus for Bitcoin Smart
  Contracts},'' in \emph{Proc. ACM SIGSAC Conference on Computer and
  Communications Security (CCS)}, 2018, pp. 83--100.

\bibitem{moser2016}
M.~M{\"o}ser, I.~Eyal, and E.~G{\"u}n~Sirer, ``{Bitcoin Covenants},'' in
  \emph{Proc. International Workshops on Financial Cryptography and Data
  Security (FC)}, 2016, pp. 126--141.

\bibitem{oconnor2017}
R.~O’Connor and M.~Piekarska, ``{Enhancing Bitcoin Transactions with
  Covenants},'' in \emph{Proc. International Workshops on Financial
  Cryptography and Data Security (FC)}, 2017, pp. 191--198.

\bibitem{delgado2020}
S.~Delgado-Segura, C.~P{\'e}rez-Sol{\`a}, G.~Navarro-Arribas, and
  J.~Herrera-Joancomart{\'\i}, ``{A Fair Protocol for Data Trading Based on
  Bitcoin Transactions},'' \emph{Future Generation Computer Systems}, vol. 107,
  pp. 832--840, 2020.

\bibitem{kumaresan2015}
R.~Kumaresan, T.~Moran, and I.~Bentov, ``{How to Use Bitcoin to Play
  Decentralized Poker},'' in \emph{Proc. 22nd ACM SIGSAC Conference on Computer
  and Communications Security (CCS)}, 2015, p. 195–206.

\bibitem{miller2017}
A.~Miller and I.~Bentov, ``{Zero-Collateral Lotteries in Bitcoin and
  Ethereum},'' in \emph{Proc. IEEE European Symposium on Security and Privacy
  Workshops (EuroS\&PW)}, 2017.

\bibitem{bartoletti2017}
M.~Bartoletti and R.~Zunino, ``{Constant-Deposit Multiparty Lotteries on
  Bitcoin},'' in \emph{Proc. International Workshops on Financial Cryptography
  and Data Security (FC)}, 2017, pp. 231--247.

\bibitem{lerner2022}
S.~D. Lerner, J.~{\'A}. Cid-Fuentes, J.~Len, R.~Fern{\`a}ndez-Val{\`e}ncia,
  P.~Gallardo, N.~Vescovo, R.~Laprida, S.~Mishra, F.~Jinich, and D.~Masini,
  ``{RSK: A Bitcoin Sidechain with Stateful Smart-Contracts},''
  \emph{Cryptology ePrint Archive}, 2022.

\bibitem{banasik2016}
W.~Banasik, S.~Dziembowski, and D.~Malinowski, ``{Efficient Zero-Knowledge
  Contingent Payments in Cryptocurrencies Without Scripts},'' in \emph{Proc.
  21st European Symposium on Research in Computer Security (ESORICS)}, 2016,
  pp. 261--280.

\bibitem{das2019}
P.~Das, L.~Eckey, T.~Frassetto, D.~Gens, K.~Host{\'a}kov{\'a}, P.~Jauernig,
  S.~Faust, and A.-R. Sadeghi, ``{FastKitten: Practical Smart Contracts on
  Bitcoin},'' in \emph{Proc. 28th USENIX Security Symposium}, 2019, pp.
  801--818.

\end{thebibliography}

\end{document}